\colorlet{lprolog}{blue!50!blue}  
\colorlet{abellatop}{blue!70!green}
\colorlet{abellatac}{orange!30!black}
\colorlet{abellabad}{red!80!yellow}
\lstdefinelanguage{lprolog}{%
  alsoletter={-},
  classoffset=0,%
  morekeywords={sig,module,type,kind,pi,sigma,end,true,false,remove,from,import,if,then,else,=,{!,!},let,U,\>},%
  keywordstyle=\color{lprolog},%
  classoffset=0,%
  otherkeywords={:-,\&,\{,\},!},%
  sensitive=true,%
  morestring=[bd]",%
  morecomment=[l]\%,%
  morecomment=[n]{(*}{*)},%
}
\lstdefinelanguage{abella}[]{lprolog}{%
  alsoletter={-},
  classoffset=1,%
  morekeywords={Close,CoDefine,Define,Kind,Query,Quit,Specification,
    Set,Split,Theorem,Undo,by,as,prop,forall,exists,nabla},%
  keywordstyle=\color{abellatop},%
  classoffset=2,%
  morekeywords={abbrev,apply,backchain,case,coinduction,cut,
    induction,inst,intros,monotone,on,permute,rename,left,right,witness,
    search,split,to,unabbrev,unfold,assert,with},%
  keywordstyle=\color{abellatac},%
  classoffset=3,%
  morekeywords={undo,abort,skip,clear},%
  keywordstyle=\color{abellabad}\underbar,%
  classoffset=0,%
}
\newcommand{\consLNCFilter}[2]{\ifLang{(\isNothing #1)}{#2}{\key{cons} \app (\getLNC{#1}) \app #2}}
\newcommand{\consLNCFilterName}{\mathit{cons}^{*}}
\newcommand{\LNC}[1]{{\hat{#1}}}
\newcommand{\isIn}[2]{#1\app \key{in} \app #2}
\newcommand{\isEmptyLNC}[1]{\key{isEmpty} \app #1}
\newcommand{\isNothing}[1]{\key{isNothing} \app #1}
\newcommand{\LangDef}{\mathcal{L}}
\newcommand{\calcName}{\mathcal{L}\textendash\textsf{Tr}}
\newcommand{\matchLNC}[2]{\mathit{match}(#1,#2)}
\newcommand{\errorLNC}{\key{error}}
\newcommand{\typeOfPattern}{\typeOf}
\newcommand{\atomic}{_\key{@}}		
\newcommand{\subRule}{_\textsf{r}}		
\newcommand{\subFormula}{_\textsf{f}}		
\newcommand{\subListFormula}{_\textsf{lf}}		
\newcommand{\subTerm}{_\textsf{t}}		
\newcommand{\subFormulaAndTerm}{_{(\textsf{f } and \textsf{ t})}}
\newcommand{\subListTerm}{_\textsf{lt}}
\newcommand{\ifSEM}{\mathit{if}}
\newcommand{\thenSEM}{\mathit{then}}
\newcommand{\elseSEM}{\mathit{else}}
\newcommand{\letSEM}{\mathit{let}}
\newcommand{\inSEM}{\mathit{in}}
\newcommand\eqSEM{\app %
 \rule[.4ex]{5pt}{0.5pt}\llap{\rule[.8ex]{5pt}{0.5pt}}\app }
\newcommand{\uniquefySEMJustName}{\mathit{uniquefy}}
\newcommand{\uniquefySEM}[5]{\mathit{uniquefy}#1(#2,#3,#4, #5)}
\newcommand{\uniquefySEMfound}[4]{\mathit{uniquefy}^{\bullet}(#1, #2,#3,#4)}
\newcommand{\uniquefySEMreplace}[3]{\mathit{uniquefy}^{\dagger}#1(#2,#3)}
\newcommand{\castLNC}{\key{cast}}
\newcommand{\consLNC}[2]{\key{cons} \app #1 \app #2}
\newcommand{\emptyListLNC}{\key{nil}}
\newcommand{\emptyMapLNC}{\key{emptyMap}}
\newcommand{\setRulesLNC}[1]{\key{setRules} \app #1}
\newcommand{\getRulesLNC}{\key{getRules}}
\newcommand{\tickSEMna}{\mathit{tick}}
\newcommand{\tickSEM}[1]{\mathit{tick}(#1)}
\newcommand{\varsSEM}[1]{\mathit{vars}(#1)}
\newcommand{\rangeSEM}[1]{\mathit{range}(#1)}
\newcommand{\zipSEM}[2]{\mathit{zip}(#1,#2)}
\newcommand{\lastSEM}[1]{\mathit{last}(#1)}
\newcommand{\checkzipSEM}[2]{\mathit{checkZip}(#1,#2)}
\newcommand{\failureSEM}{\mathit{fail}}
\newcommand{\LanguageType}{\key{Language}}
\newcommand{\RuleType}{\key{Rule}}
\newcommand{\FormulaType}{\key{Formula}}
\newcommand{\TermType}{\key{Term}}
\newcommand{\VarType}{\key{MetaVar}}
\newcommand{\MapType}[2]{\key{Map} \app #1 \app #2}
\newcommand{\StringType}{\key{String}}
\newcommand{\OpnameType}{\key{OpName}}
\newcommand{\PrednameType}{\key{PredName}}
\newcommand{\NameType}{\key{Name}}
\newcommand{\MaybeType}[1]{\key{Option} \app #1}
\newcommand{\langVar}[1]{{#1}}
\newcommand{\getOverlap}[2]{\key{varOverlap}\app(#1,#2)}
\newcommand{\newVar}{\key{newVar}}
\newcommand{\varsLNC}[1]{\key{vars}(#1)}
\newcommand{\createMapLNC}[2]{\key{map}(#1,#2)}
\newcommand{\tickedLNC}[1]{#1\texttt{'}}
\newcommand{\unboundLNC}[1]{\key{unbound}(#1)}
\newcommand{\tickedMMLNC}[2]{\tickedLNC{#1}}
\newcommand{\justLNC}[1]{\key{just}\app#1}
\newcommand{\getLNC}[1]{\key{get}\app#1}
\newcommand{\filterNothingLNC}[1]{\mathit{filterNothing}(#1)}
\newcommand{\GammaForCnames}[1]{\Gamma_{\textsf{cn}}^{(#1)}}
\newcommand{\GammaForRule}[1]{\Gamma_{\textsf{rule}}}
\newcommand{\subsForRule}[1]{\theta_{\textsf{rule}}^{(#1)}}
\newcommand{\subsForCnames}[1]{\theta_{\textsf{cn}}^{(#1)}}
\newcommand{\ruleComposition}[2]{#1 ;_{\textsf{r}} #2}
\newcommand{\eliminatorOf}[2]{\key{eliminatorOf}(#1,#2)}
\newcommand{\andLNC}[2]{#1 \app \key{and}\app #2}
\newcommand{\orLNC}[2]{#1 \app \key{or}\app #2}
\newcommand{\notLNC}[1]{\key{not}\app #1}
\newcommand{\foldLNC}[2]{\key{fold}\app #1 \app #2}
\newcommand{\mapKeysLNC}[1]{\key{mapKeys}\app #1}
\newcommand{\conclLNC}{\mathit{conclusion}}
\newcommand{\premisesLNC}{\mathit{premises}}
\newcommand{\selfLNC}{\mathit{self}}
\newcommand{\uniquefy}[6]{\key{uniquefy}(#1,#2,#3) \Rightarrow (#4,#5):#6 }
\newcommand{\isContra}{\key{contra}}
\newcommand{\isCovar}{\key{covariant}}
\newcommand{\isInvar}{\key{invariant}}
\newcommand{\inductive}{\key{inductive}}
\newcommand{\isInput}{\key{input}}
\newcommand{\overlap}{\key{varOverlap}}
\newcommand{\skipLNC}{\key{skip}}
\newcommand{\nothingLNC}{\key{nothing}}
\newcommand{\select}[3]{{#1}[#2]:\app #3}
\newcommand{\runningExample}{\lambda^{\Bool}}
\newcommand{\langCom}{\textsc{lang-n-change}}
\newcommand{\forLang}[2]{\key{for \app  each} \app #1 \app \key{in} \app #2:\app}
\newcommand{\forOrderedLang}[3]{\key{for \app  ordered} \app #1 \app \app #2 \app \key{in} \app #3:\app}
\newcommand{\ifLang}[3]{\key{if} \app #1 \app \key{then} \app #2 \app \key{else}\app #3}
\newcommand{\ruletag}[1]{\textsc{(#1)}}
\newcommand{\grammarDerivation}[1]{{\Rightarrow_{G'}^{*}}}
\newcommand{\grammarDerivationG}[1]{{\Rightarrow_{G}^{*}}}
\definecolor{magenta(dye)}{rgb}{0.79, 0.08, 0.48}
\definecolor{bondiblue}{rgb}{0.0, 0.58, 0.71}
\definecolor{navyblue}{rgb}{0.0, 0.0, 0.5}
\definecolor{lightskyblue}{rgb}{0.53, 0.81, 0.98}
\newcommand*\colourcheck[1]{%
  \expandafter\newcommand\csname #1check\endcsname{\textcolor{#1}{\text{\ding{52}}}}%
}
\newcolumntype{Y}{>{\raggedleft\arraybackslash}X}
\newcommand{\ninference}[3]{\inferrule[(#1)]{#2}{#3}}
\newcommand{\ba}{\begin{array}}
\newcommand{\ea}{\end{array}}
\newenvironment{syntax}{\[\ba{l@{\;\;}lcl}}{\ea\]}
\definecolor{ShadowColor}{RGB}{30,150,190}
\newcommand\Cshadowbox{\VerbBox\@Cshadowbox}
\def\@Cshadowbox#1{%
  \setbox\@fancybox\hbox{\fbox{#1}}%
  \leavevmode\vbox{%
    \offinterlineskip
    \dimen@=\shadowsize
    \advance\dimen@ .5\fboxrule
    \hbox{\copy\@fancybox\kern.5\fboxrule\lower\shadowsize\hbox{%
      \color{ShadowColor}\vrule \@height\ht\@fancybox \@depth\dp\@fancybox \@width\dimen@}}%
    \vskip\dimexpr-\dimen@+0.5\fboxrule\relax
    \moveright\shadowsize\vbox{%
      \color{ShadowColor}\hrule \@width\wd\@fancybox \@height\dimen@}}}
\newcommand{\matches}{\mathrel{\triangleright}}
\newcommand{\typeOf}{\vdash}
\newcommand{\step}{\longrightarrow}
\definecolor{lightblue}{rgb}{0.25,0.25,1}
\definecolor{lightgray}{gray}{0.9}
\definecolor{darkergrey}{rgb}{0.75, 0.75, 0.75}
\newcommand{\HI}[1]{\colorbox{darkergrey}{#1}}
\newcommand{\key}[1]{\ensuremath{\mathtt{#1}}}
\newcommand{\Int}{\key{Int}}
\newcommand{\Bool}{\key{Bool}}
\newcommand{\Float}{\key{Float}}
\newcommand{\app}{\;}
\newcommand{\If}{\key{if}\:}
\newcommand{\tagsc}[1]{\tag{\textsc{#1}}}
\definecolor{lightgray}{gray}{0.9}
\newcommand\dyn{\key{dyn}}
\newcommand{\true}{{\key{true}}}
\newcommand{\List}{{\key{List}}}
\newcommand{\nil}{{\key{nil}}}
\newcommand{\cons}{{\key{cons}}}
\newcommand{\head}{{\key{head}}}
\newcommand{\tail}{{\key{tail}}}
\newcommand{\Ldl}{\mathcal{L}}
\newcommand{\emptyLdl}[1]{\Ldl^{\epsilon}}
\title{A Calculus for Language Transformations}
\author{Benjamin Mourad\and
Matteo Cimini}
\authorrunning{Mourad and Cimini}
\institute{University of Massachusetts Lowell, Lowell MA 01854, USA 
}
\begin{document}

\maketitle              
\begin{abstract}
In this paper we propose a calculus for expressing algorithms for programming languages transformations. 
We present the type system and operational semantics of the calculus, and we prove that it is type sound. 
We have implemented our calculus, and we demonstrate its applicability with common examples in programming languages.
As our calculus manipulates inference systems, our work can, in principle, be applied to logical systems. 
\end{abstract}

\section{Introduction}\label{intro}

Operational semantics is a standard de facto to defining the semantics of programming languages \cite{PLOTKIN}. 
However, producing a programming language definition is still a hard task. It is not surprising that theoretical and software tools for supporting the modeling of languages based on operational semantics have received attention in research \cite{LangWorkbenches,Rosu2010,Redex}. 
In this paper, we address an important aspect of language reuse which has not received attention so far: Producing language definitions from existing ones by the application of transformation algorithms. Such algorithms may automatically add features to the language, or switch to different semantics styles. In this paper, we aim at providing theoretical foundations and a software tool for this aspect. 

Consider the typing rule of function application below on the left and its version with algorithmic subtyping on the right. 
{\footnotesize 
\begin{gather*}
{
\ninference{t-app}
	{
	\Gamma \typeOf \app e_1 : T_1\to T_2 \\
	\Gamma \typeOf \app e_2 : T_1 	
	} 
	{ \Gamma \typeOf \app e_1\app e_2 : T_2}
}
 ~~ \stackrel{f(\textsc{t-app})}{\Longrightarrow}  ~~
{
\ninference{t-app'}
	{
	\Gamma \typeOf \app e_1 : T_{11}\to T_2 \\
	\Gamma \typeOf \app e_2 : T_{12} \\\\  T_{12} <: T_{11}
	} 
	{ \Gamma \typeOf \app e_1\app e_2 : T_2}
}
\end{gather*}
}
Intuitively, we can describe \textsc{(t-app')} as a function of \textsc{(t-app)}. Such a function includes, at least, giving new variable names when a variable is mentioned more than once, and must relate the new variables with subtyping according to the variance of types (covariant vs contravariant). 
Our question is: \emph{Can we express, easily, language transformations in a safe calculus?} 

Language transformations are beneficial for a number of reasons. On the theoretical side, they isolate and make explicit the insights that underly some programming languages features or semantics style. On the practical side, language transformations do not apply just to one language but to several languages. They can alleviate the burden to language designers, who can use them to automatically generate new language definitions using well-established algorithms rather than manually defining them, an error prone endeavor. 

%
%
In this paper, we make the following contributions. 
\begin{itemize}
\item We present $\calcName$ (pronounced ``Elter''), a formal calculus for language transformations (Section \ref{main}). We define the syntax (Section \ref{syntax}), operational semantics (Section \ref{operational}), and type system (Section \ref{typesystem}) of $\calcName$. 
\item We prove that $\calcName$ is type sound (Section \ref{typesystem}). 
\item We show the applicability of $\calcName$ to the specification of two transformations: adding subtyping and switching from small-step to big-step semantics (Section \ref{examples}). Our examples show that $\calcName$ is expressive and offers a rather declarative style to programmers. 
\item We have implemented $\calcName$ \cite{ltr}, and we report that we have applied our transformations to several language definitions.
 \end{itemize}

Related work are discussed in Section \ref{related}, and Section \ref{conclusion} concludes the paper. 
\section{A Calculus for Language Transformations}\label{main}

We focus on language definitions in the style of operational semantics. 
To briefly summarize, languages are specified with a BNF grammar and a set of inference rules. 
BNF grammars have \emph{grammar productions} such as $\text{\sf Types} \app T  ::=  B \mid \app T\to  T$. We call $\text{\sf Types}$ a \emph{category name}, $T$ is a \emph{grammar meta-variable}, and $B$ and $T\to  T$, as well as, for example, $(\lambda x.e\app v)$, are \emph{terms}. 
$(\lambda x.e\app v) \step e[v/x]$ and $\Gamma \typeOf (e_1\app e_2) : T_2$ are \emph{formulae}. 
An \emph{inference rule} $\inference{f_1, \ldots, f_n}{f}$ has a set of formulae above the horizontal line, which are called \emph{premises}, and a formula below the horizontal line, which is called the \emph{conclusion}. 



\subsection{Syntax of $\calcName$}\label{syntax}

Below we show the $\calcName$ syntax for language definitions, which reflects 
the operational semantics style of defining languages. Sets are accommodated with lists.  
%




{\footnotesize
$
  cname \in \textsc{CatName}, \langVar{X} \in \textsc{Meta-Var}, 
  opname \in \textsc{OpName}, predname \in \textsc{PredName}
$
\begin{syntax}
   \text{\sf Language} & \LangDef & ::= & (G,R) \\
   \text{\sf Grammar} & G & ::= & \{ s_1, \ldots, s_n \} \\
   \text{\sf Grammar Pr.} & s & ::= & cname \app \langVar{X} ::= lt\\
   \text{\sf Rule} & r & ::= & \inference{lf}{f}\\
   \text{\sf Formula} & f & ::= & predname \app lt  \\
   \text{\sf Term} & t & ::= &  \langVar{X} \mid opname\app lt  \mid (\langVar{X})t  \mid t[t/\langVar{X}] \\
   \text{\sf List of Rules} & R & ::= & \nil \mid \consLNC{r}{R}\\
   \text{\sf List of Formula} & \mathit{lf} & ::= & \nil \mid \consLNC{f}{\mathit{lf}}\\
   \text{\sf List of Terms} & lt & ::= & \nil \mid \consLNC{t}{lt}
\end{syntax}
}

We assume a set of category names \textsc{CatName}, a set of meta-variables \textsc{Meta-Var}, a set of constructor operator names \textsc{OpName}, and a set of predicate names \textsc{PredName}. We assume that these sets are pairwise disjoint. 
\textsc{OpName} contains elements such as $\to$ and $\lambda$ (elements do not have to necessarily be (string) names). 
\textsc{PredName} contains elements such as $\typeOf$ and $\step$.
To facilitate the modeling of our calculus, we assume that terms and formulae are defined in abstract syntax tree fashion. Here this means that they always have a top level constructor applied to a list of terms. 
%
%
$\calcName$ also provides syntax to specify unary binding $(\langVar{z})t$ and capture-avoiding substitution $t[t/\langVar{z}]$. Therefore, $\calcName$ is tailored for static scoping rather than dynamic scoping. 
Lists can be built as usual with the $\nil$ and \key{cons} operator. We sometimes use the shorthand $[o_1, \ldots o_n]$ for the corresponding series of $\key{\cons}$ applications ended with $\nil$. 

To make an example, the typing rule for function application and the $\beta$-reduction rules are written as follows. ($app$ is the top-level operator name for function application).
{\footnotesize
\begin{gather*}
\inference{[\app \typeOf \app [\langVar{\Gamma}, \langVar{e_1}, (\to [\langVar{T_1}, \langVar{T_2}])], \app \typeOf \app [\langVar{\Gamma}, \langVar{e_2}, \langVar{T_1}]\app ]} 
	{ \typeOf \app [\langVar{\Gamma}, (app \app [\langVar{e_1}, \langVar{e_2}]), \langVar{T_2}]}
\qquad
\inference{[]}{\step \app [(app \app [(\lambda \app [(\langVar{x})\langVar{e}]), \langVar{v}]), \langVar{e}[\langVar{v}/x]] }
\end{gather*}
}
%
%

Below we show the rest of the syntax of $\calcName$. \\
{\footnotesize
$
  x \in \textsc{Var}
  , str \in \textsc{String}
  , \{\selfLNC, \premisesLNC, \conclLNC\} \subseteq \textsc{Var}
$
\begin{syntax}
   \text{\sf Expression} & e & ::= & x \mid cname \mid str \mid \LNC{t} \mid \LNC{f} \mid \LNC{r}\\
   &&& \mid  \emptyListLNC \mid  \consLNC{e}{e}\mid \head \app e \mid \tail \app e \mid e @ e \\
   &&& \mid \createMapLNC{e}{e} \mid  e(e)\mid  \mapKeysLNC{e}\\
   &&& \mid \justLNC{e} \mid  \nothingLNC \mid \getLNC{e}\\
   &&& \mid cname \app \langVar{X} ::= e  \app  \mid cname \app \langVar{X} ::= \ldots \app e \\ 
   &&&\mid \getRulesLNC\mid \setRulesLNC{e}\\
   &&& \mid \select{e}{p}{e}   \mid \select{e(\key{keep})}{p}{e} 
   \mid \uniquefy{e}{e}{str}{x}{x}{e} \\
   &&& \mid \ifLang{b}{e}{e} \mid e \app ; \app e \mid \ruleComposition{e}{e} \mid  \skipLNC  \\
   &&&  \mid  \newVar \mid \tickedMMLNC{e}{e}\mid \foldLNC{predname}{e} \\ 
   &&&  \mid \errorLNC\\
   \text{\sf Boolean Expr.} & b & ::= &
   e == e \mid \isEmptyLNC{e} \mid \isIn{e}{e}\mid \isNothing{e} 
   \mid \andLNC{b}{b} \mid \orLNC{b}{b} \mid \notLNC{b} \\
   \text{\sf $\calcName$ Rule} & \LNC{r} & ::= & \inference{e}{e}\\
   \text{\sf $\calcName$ Formula} & \LNC{f} & ::= & predname \app e \mid x \app e \\
   \text{\sf $\calcName$ Term} & \LNC{t} & ::= &  \langVar{X} \mid opname\app e \mid x \app e \mid (\langVar{X})e  \mid e[e/\langVar{X}] \\
   \text{\sf Pattern} & p & ::= & x:T\mid predname \app p \mid opname \app p \mid x \app p \mid \emptyListLNC \mid  \consLNC{p}{p}\\
   \text{\sf Value} & v & ::= & {t} \mid {f} \mid {r} \mid cname\mid str \\
   &&& \mid  \emptyListLNC \mid  \consLNC{v}{v} 
  \mid  \createMapLNC{v}{v}\mid  \justLNC{v} \mid  \nothingLNC \mid \skipLNC
   \end{syntax}
%
}
Programmers write expressions to specify transformations. 
At run-time, an expression will be executed with a language definition. 
Evaluating an expression may modify the current language definition. 

\emph{Design Principles:} 
We strive to offer well-crafted operations that map well with the language manipulations that are frequent in adding features to languages or switching semantics styles. 
There are three features that we can point out which exemplify our approach the most: 1) The ability to program parts of rules, premises and grammars, 2) selectors $\select{e}{p}{e}$, and 3) the \key{uniquefy} operation. 
Below, we shall the describe the syntax for transformations, and place some emphasis in motivating these three operations. 
\\ \indent \emph{Basic Data Types:} 
$\calcName$ has strings and has lists with typical operators for extracting their head and tail, as well as for concatenating them ($@$). 
$\calcName$ also has maps (key-value). In $\createMapLNC{e_1}{e_2}$, $e_1$ and $e_2$ are lists. The first element of $e_1$ is the key for the first element of $e_2$, and so on for the rest of elements. Such a representation fits better our language transformations examples, as we shall see in Section \ref{examples}.  Operation $e_1(e_2)$  queries a map, where $e_1$ is a map and $e_2$ is a key, and $\mapKeysLNC{e}$ returns the list of keys of the map $e$. 
Maps are convenient in $\calcName$ to specify information that is not expressible in the language definition. For example, we can use maps to store information about whether some type argument is covariant or contravariant, or to store information about the input-output mode of the arguments of relations. Section \ref{examples} shows that we use maps in this way extensively. 
$\calcName$ also has options (\key{just}, \key{nothing}, and \key{get}). We include options because they are frequently used in combination with the selector operator described below. 
Programmers can refer to grammar categories (\emph{cname}) in positions where a list is expected. When \emph{cname} is used the corresponding list of grammar items is retrieved.
%
\\ \indent \emph{Grammar Instructions}:
$cname \app \langVar{X} ::= e$ is essentially a grammar production. With this instruction, the current grammar is augmented with this production.  
$cname \app \langVar{X} ::= \ldots \app e$ (notice the dots) adds the terms in $e$ to an existing production. 
$\getRulesLNC$ and $\setRulesLNC{e}$ retrieve and set the current list of rules, respectively. 
%
\\ \indent \emph{Selectors}:
$\select{e_1}{p}{e_2}$ is the selector operator. 
This operation selects one by one the elements of the list $e_1$ that satisfy the pattern $p$ and executes the body $e_2$ for each of them. 
This operation returns a list that collects the result of each iteration. 
Selectors are useful for selecting elements of a language with great precision, and applying manipulations to them. To make an example, suppose that the variable \emph{prems} contains the premises of a rule and that we wanted to invert the direction of all subtyping premises in it. 
The operation $\select{prems}{T_1 <: T_2}{\key{just} \app T_2 <: T_1}$ does just that. 
Notice that the body of a selector is an option. This is because it is common for some iteration to return no values ($\nothingLNC$). The examples in Section \ref{examples} show this aspect. 
Since options are commonly used in the context of selector iterations, we have designed our selector operation to automatically handle them. That is, $\nothingLNC$s are automatically removed, and the selector above returns the list of new subtyping premises rather than a list of options. 
The selector $\select{e(\key{keep})}{p}{e}$ works like an ordinary selector except that it also returns the elements that failed the pattern-matching.  
%
\\ \indent\emph{Uniquefy}:
When transforming languages it is often necessary to assign distinct variables. The example of algorithmic subtyping in the introduction is archetypal. 
$\calcName$ accommodates this operation as primitive with \key{uniquefy}. \\
$\uniquefy{e_1}{e_2}{str}{x}{y}{e_3}$ takes in input a list of formulae $e_1$, a map $e_2$, and a string $str$ (we shall discuss $x$, $y$, and $e_3$ shortly). This operation modifies the formulae $e_2$ to use different variable names when a variable is mentioned more than once. However, not every variable is subject to the replacement. Only the variables that appear in some specific positions are targeted. 
The map $e_2$ and the string $str$ contain the information to identify these positions. 
$e_2$ maps operator names and predicate names to a list that contains a label (as a string) for each of their arguments. 
For example, the map $m = \{\typeOf \app \mapsto [``{in}", ``{in}",  ``{out}"]\}$ says that $\Gamma$ and $e$ are inputs in a formula $\Gamma \typeOf e:T$, and that $T$ is the output. 
Similarly, the map $\{\to\app  \mapsto [``{contravariant}", ``{covariant}""]\}$ says that $T_1$ is contravariant and $T_2$ is covariant in $T_1 \to T_2$. 
The string $str$ specifies a label. $\calcName$ inspects the formulae in $e_1$ and their terms. Arguments that correspond to the label according to the map 
then receive a new variable. 
%
To make an example, if $\mathit{lf}$ is the list of premises of \textsc{(t-app)} and $m$ is defined as above (input-output modes), the operation $\uniquefy{\mathit{lf}}{m}{``{out}"}{x}{y}{e_3}$ creates the premises of \textsc{(t-app')} shown in the introduction. 
Furthermore, the computation continues with the expression $e_3$ in which $x$ is bound to these premises and $y$ is bound to a map that summarizes the changes made by \key{uniquefy}. 
This latter map associates every variable $X$ to the list of new variables that \key{uniquefy} has used to replace $X$. For example, since \key{uniquefy} created the premises of \textsc{(t-app')} by replacing $T_1$ in two different positions with $T_{11}$ and $T_{12}$, the map $\{T_1\mapsto [T_{11}, T_{12}]\}$ is passed to $e_3$ as $y$. 
Section \ref{examples} will show two examples that make use of \key{uniquefy}.
%
%
\\\indent \emph{Control Flow}:
$\calcName$ includes the if-then-else statement with typical guards.  
$\calcName$ also has the sequence operation $;$ (and $\skipLNC$) to execute language transformations one after another. $e_1 ;_{\text{r}} e_2$, instead, executes sequences of transformations on rules. After $e_1$ evaluates to a rule, $e_2$ makes use of that rule as the subject of its transformations. 
%
\\\indent\emph{Programming Rules, Premises, and Terms}:
In $\calcName$ a programmer can write $\calcName$ terms ($\hat{t}$), $\calcName$ formulae ($\hat{f}$), and $\calcName$ rules ($\hat{r}$) in expressions. 
These differ from the terms, formulae and rules of language definitions in that they can contain arbitrary expressions, such as if-then-else statements, at any position. 
This is a useful feature as it provides a declarative way to create rules, premises, or terms. To make an example with rule creation, we can write 
\[\inference{\select{prems}{T_1 <: T_2}{\key{just} \app T_2 <: T_1}}{f}\]
where \emph{prems} is the list of premises from above, and $f$ is a formula. As we can see, using expressions above the horizontal line is a convinient way to compute the premises of a rule. 
\\\indent\emph{Other Operations}:
The operation $\foldLNC{predname}{e}$ creates a list of formulae that interleaves $predname$ to any two subsequent elements of the list $e$. 
To make an example, the operation $\foldLNC{=}{[T_1, T_2, T_3, T_4]}$ generates the list of formulae $[T_1 = T_2, T_2 = T_3, T_3 = T_4]$. 
$\varsLNC{e}$ returns the list of the meta-variables in $e$. $\newVar$ returns a meta-variable that has not been previously used. The tick operator $\tickedMMLNC{e}{}$ gives a prime $'$ to the meta-variables of $e_1$ ($\langVar{X}$ becomes $\langVar{X'}$). 
$\key{vars}$ and the tick operator also work on lists of terms. 
\\\indent\emph{Variables and Substitution:}
Some variables have a special treatment in $\calcName$. 
We can refer to the value that a selector iterates over with the variable $\selfLNC$. 
If we are in a context that manipulates a rule, we can also refer to the premises and conclusion with variables $\premisesLNC$ and $\conclLNC$. 
We use the notation $e[v/x]$ to denote the capture-avoiding substitution. 
$\theta$ ranges over finite sequences of substitutions denoted with $[v_1/x_1,\ldots, v_n/x_n]$. $e[v_1/x_2, v_1/x_2, \ldots, v_n/x_n]$ means $((e[v_1/x_1])[v_2/x_2])\ldots[v_n/x_n]$. 
We omit the definition of substitution because it is standard, for the most part. The only aspect that differs from standard substitution is that we do not substitute $\selfLNC$, $\premisesLNC$ and $\conclLNC$ in those contexts that will be set at run-time ($;_{\textsf{r}}$, and selector body). For example, $(\ruleComposition{e_1}{e_2})[v/\mathcal{X}]  \equiv \ruleComposition{(e_1[v/\mathcal{X}])}{e_2}$, where $\mathcal{X}\in \{\selfLNC,\premisesLNC, \conclLNC\}$. 


\subsection{Operational Semantics of $\calcName$}\label{operational}

\begin{figure}[tbp]
\small


\textsf{Dynamic Semantics}  \hfill \fbox{$V ; \LangDef  ;   e\step V ; \LangDef  ;   e$}

\begin{gather*}
	\inference{\{cname\app \langVar{X} ::= v\} \in G}{V ; (G,R)  ;  cname \step\atomic  V ; (G,R)  ;  v} 	\label{beta}\tagsc{r-cname-ok}\\[1ex]
	\inference{\{cname\app \langVar{X} ::= v\} \not\in G}{V ; (G,R)  ;  cname \step\atomic  V ; (G,R)  ;  \errorLNC} 	\label{beta}\tagsc{r-cname-fail}\\[1ex]
	V ; (G,R)  ;  \getRulesLNC \step\atomic  V ; (G,R)  ;  R 	\label{beta}\tagsc{r-getRules}\\[1ex]
	V ; (G,R)  ;  \setRulesLNC{v} \step\atomic  V ; (G,v)  ;  \skipLNC 	\label{beta}\tagsc{r-setRules}\\[1ex]
	\inference{G' = (G \backslash cname) \cup \{cname\app \langVar{X} ::= v\}}{V ; (G,R)  ;  (cname\app \langVar{X} ::= v)   \step\atomic  \emptyset ; (G',R)  ;  \skipLNC } 	\label{beta}\tagsc{r-new-syntax}\\[1ex]
	\inference{\{cname\app \langVar{X} ::= v'\} \in G}{V ; (G,R)  ;  (cname \app \langVar{X} ::= \ldots \app v)   \step\atomic  \emptyset ; (G,R)  ;  cname\app \langVar{X} ::= v' @ v } 	\label{beta}\tagsc{r-add-syntax-ok}\\[1ex]
	\inference{\{cname\app \langVar{X} ::= v'\} \not\in G }{V ; (G,R)  ;  (cname \app \langVar{X} ::= \ldots \app v)  \step\atomic  \emptyset ; (G',R)  ;  \errorLNC} 	\label{beta}\tagsc{r-add-syntax-fail}\\[1ex]
	V ; \LangDef  ;  (\skipLNC ; e) \step\atomic  V ; \LangDef  ;  e	\label{beta}\tagsc{r-seq}\\[1ex]
	V ; \LangDef  ;  \ruleComposition{v}{e}  \step\atomic  V ; \LangDef  ;  e\subsForRule{v} 	\label{r-rule-comp}\tagsc{r-rule-comp}\\[1ex]
	V ; \LangDef  ;  \select{\emptyListLNC}{p}{e} \step\atomic  V ; \LangDef  ;  \emptyListLNC 	\label{beta}\tagsc{r-selector-nil}\\[2ex]
	\inference{\matchLNC{v_1}{p} = \theta \qquad
	\theta' = \mbox{$\begin{cases}
	                              \subsForRule{r}
	                              & \mbox{if } v_1=r\\ 
	                          \{\selfLNC \mapsto v_1\} & \mbox{otherwise}
	                        \end{cases}
	                     $}}
	                     {V ; \LangDef  ;  \select{(\consLNC{v_1}{v_2})}{p}{e} \step\atomic  V ; \LangDef  ;  
	                     (\consLNCFilterName\app {e\theta\theta'}\app {(\select{v_2}{p}{e})})} 	
	                     \label{beta}\tagsc{r-selector-cons-ok}\\[1ex]
	\inference{\matchLNC{v_1}{p} \not= \theta}  
	                     {V ; \LangDef  ;  \select{(\consLNC{v_1}{v_2})}{p}{e} \step\atomic  V ; \LangDef  ;  
	                     (\select{v_2}{p}{e})} 	
	                     \label{beta}\tagsc{r-selector-cons-fail}\\[1ex]
	\inference{
	\langVar{X'} \not\in V \cup \varsSEM{\LangDef} \cup \rangeSEM{\tickSEMna}}{V ; (G,R)  ;  \newVar\app \step\atomic  V \cup \{\langVar{X'}\} ; \LangDef  ;  \langVar{X'}} 	\label{r-newar}\tagsc{r-newvar}\\[1ex]
	\inference{(\mathit{lf}', v_2) = \uniquefySEM\subListFormula{\mathit{lf}}{v_1}{str}{\createMapLNC{[]}{[]}}}{V ; \LangDef  ;  \uniquefy{\mathit{lf}}{v_1}{str}{x}{y}{e} \step\atomic  V ; \LangDef  ;  e[\mathit{lf}'/x,v_2/y] }	\label{r-uniquefy-ok}\tagsc{r-uniquefy-ok}\\[1ex]
	\inference{\uniquefySEM\subListFormula{\mathit{lf}}{v_1}{str}{\createMapLNC{[]}{[]}} = \mathit{fail}}{V ; \LangDef  ;  \uniquefy{\mathit{lf}}{v_1}{str}{x}{y}{e} \step\atomic  V ; \LangDef  ; \errorLNC }	\label{r-uniquefy-fail}\tagsc{r-uniquefy-fail}\\[1ex]
\text{where } \subsForRule{r} \equiv [r/\selfLNC,v_1/ \premisesLNC  , v_2/\conclLNC  ]  \qquad \text{if } r=\inference{v_1}{v_2}
\end{gather*}
\caption{Reduction Semantics of $\calcName$}
\label{fig:dynamicsemantics}
\end{figure}

In this section we show a small-step operational semantics for $\calcName$. 
A configuration is denoted with $V ; \LangDef ;   e$, where $e$ is an expression, $\LangDef$ is the language subject of the transformation, and $V$ is the set of meta-variables that have been generated by $\newVar$. Calls to $\newVar$ make sure not to produce name clashes. 


The main reduction relation is $V ; \LangDef  ;   e\step V' ; \LangDef'  ;   e'$, defined as follows. 
Evaluation contexts $E$ are straightforward and can be found in Appendix \ref{evaluationcontexts}. 
{\footnotesize
\begin{gather*}
\inference
	{V ; \LangDef  ;  e\step\atomic V' ; \LangDef'  ;  e' \\ \vdash \LangDef'  }
	{V ; \LangDef  ;  E[e] \step V' ; \LangDef'  ;  E[e']}
\qquad
\inference
	{V ; \LangDef  ;  e\step\atomic V' ; \LangDef'  ;  e' \\ \not\vdash \LangDef' }
	{V ; \LangDef  ;  E[e] \step V ; \LangDef  ;  \errorLNC}
\\[1ex]
\inference
{}	{V ; \LangDef  ;  E[\errorLNC] \step V ; \LangDef  ;  \errorLNC}
\end{gather*}
}
This relation relies on a step $V ; \LangDef  ;  e\step\atomic V' ; \LangDef'  ;  e'$, which concretely performs the step. 
Since a transformation may insert ill-formed elements such as $\typeOf T \app T$ or $\to \app e \app e$ in the language, we also rely on a notion of type checking for language definitions $\vdash \LangDef'$ decided by the language designer. For example, our implementation of $\calcName$ compiles languages to $\lambda$-prolog and detects ill-formed languages at each step, but the logic of Coq, Agda, Isabelle could be used as well. 
Our type soundness theorem works regardless of the definition of $\typeOf \LangDef'$. 



Fig. \ref{fig:dynamicsemantics} shows the reduction relation $V ; \LangDef  ; e\step\atomic V' ; \LangDef'  ;  e'$. 
We show the most relevant rules. The rest of the rules can be found in Appendix \ref{app:operational}.
\\
\indent 
\textsc{(r-cname-ok)} and \textsc{(r-cname-fail)} handle the encounter of a category name. We retrieve the corresponding list of terms from the grammar or throw an error if the production does not exist. 
\\
\indent 
\textsc{(r-getRules)} retrieves the list of rules of the current language, and \textsc{(r-setRules)} updates this list. 
\\
\indent 
\textsc{(r-new-syntax)} replaces the grammar with a new one that contains the new production. The meta-operation $G \backslash cname$ in that rule removes the production with category name $cname$ from $G$ (definition is straightforward and omitted). The position of $cname$ in $(cname\app \langVar{X} ::= v)$ is not an evaluation context, therefore \textsc{(r-cname-ok)} will not replace that name. 
\textsc{(r-add-syntax-ok)} takes a step to the instruction for adding \emph{new} syntax. The production to be added includes both old and new grammar terms. 
\textsc{(r-add-syntax-fail)} throws an error when the category name does not exist in the grammar, or the meta-variable does not match. 
\\
\indent 
\textsc{(r-rule-seq)} applies when the first expression has evaluated, and starts the evaluation of the second expression. (Evaluation context $E ; e$ evaluates the first expression) 
\\
\indent 
\textsc{(r-rule-comp)} applies when the first expression has evaluated to a rule, and starts the evaluation of the second expression where $\subsForRule{v}$ sets this rule as the current rule. 
\\
\indent 
Rules \textsc{(r-selector-*)} define the behavior of a selector.
\textsc{(r-selector-cons-ok)} and \textsc{(r-selector-cons-fail)} make use of the meta-operation $\matchLNC{v_1}{p} = \theta$. If this operation succeeds it returns the substitutions $\theta$ with the associations computed during pattern-matching. The definition of $\mathit{match}$ is standard and is omitted. 
The body is evaluated with these substitutions and with $\selfLNC$ instantiated with the element selected. If the element selected is a rule, then the body is instantiated with $\subsForRule{v}$ to refer to that rule as the current rule. 
The body of the selector always returns an option type. However, $\consLNCFilterName$ is defined as: $\consLNCFilterName \app {e_1}\app {e_2} \equiv \consLNCFilter{e_1}{e_2}$.
Therefore, $\nothingLNC$s are discarded, and values wrapped in \key{just}s are unwrapped. 
\\
\indent 
\textsc{(r-newvar)} returns a new meta-variable and augments $V$ with it. Meta-variables are chosen among those that are not in the language, have not previously been generated by $\newVar$, and are not in the range of $\tickSEMna$. This meta-operation is used by the tick operator to give a prime to meta-variables. 
\textsc{r-newvar} avoids clashes with these variables, too. 
%
%
\\
\indent 
\textsc{(r-uniquefy-ok)} and \textsc{(r-uniquefy-fail)} define the semantics for \key{uniquefy}. They rely on the meta-operation $\uniquefySEM\subRule{\mathit{lf}}{v}{str}{\createMapLNC{[]}{[]}}$, which takes the list of formulae $\mathit{lf}$, the map $v$, the string $str$, and an empty map to start computing the result map. 
%
The definition of $\mathit{uniquefy}\subRule$ is mostly a recursive traversal of list of formuale and terms, and we omit that. It can be found in Appendix \ref{uniquefy}. 
This function can succeed and return a pair $(\mathit{lf}',v_2)$ where $\mathit{lf}'$ is the modified list of formulae and $v_2$ maps meta-variables to the new meta-variables that have replaced it. $\mathit{uniquefy}\subRule$ can also fail. This may happen when, for example, a map such as $\{\to \app \mapsto ``contra"\}$ is passed when $\to$ requires two arguments. 

\subsection{Type System of $\calcName$}\label{typesystem}
\begin{figure}[tbp]
\textsf{Type System (Configurations)}  \hfill \fbox{$\Gamma \typeOf \app V ; \LangDef ; e $}
\small
\begin{gather*}
\inference{ V \cap \varsSEM{\LangDef} = \emptyset  & \typeOf  \LangDef & \emptyset \typeOf  e : \LanguageType}
{
\typeOf \app V ; \LangDef ; e 
}
\end{gather*}
\textsf{Type System (Expressions)}  \hfill  \fbox{$\Gamma \typeOf \app e : T $}
\begin{gather*}
\ninference{t-var}
{}{
\Gamma, x:T\typeOf \app x: T}
\quad
\ninference{t-opname}
	{\Gamma \typeOf \app e : \List\app\TermType } 
	{ \Gamma \typeOf \app (opname\app e) : \TermType}
\quad
\ninference{t-opname-var}
	{ \Gamma \typeOf \app e : \List\app\TermType } 
	{ \Gamma, x : \OpnameType \typeOf \app (x\app e) : \TermType}
\\[2ex]
\ninference{t-meta-var}
{}
	{ \langVar{X}: \TermType}
\qquad
\ninference{t-abs}
	{\Gamma \typeOf \app e : \TermType } 
	{ \Gamma \typeOf \app (\langVar{z})e: \TermType}
\qquad
\ninference{t-subs}
	{\Gamma \typeOf \app e_1 : \TermType \quad \Gamma \typeOf \app e_2 : \TermType } 
	{ \Gamma \typeOf \app e_1[e_2/\langVar{z}]: \TermType}
\\[2ex]
\ninference{t-predname}
	{ \Gamma \typeOf \app e : \List\app\TermType } 
	{ \Gamma \typeOf \app ( predname \app e) : \FormulaType}
\qquad
\ninference{t-predname-var}
	{\Gamma \typeOf \app e : \List\app\TermType } 
	{ \Gamma,x: \PrednameType \typeOf \app ( x \app e) : \FormulaType}
\\[2ex]
\ninference{t-rule}
	{\Gamma \typeOf \app e_1 : \List\app\FormulaType \\\\ \Gamma \typeOf \app e_2 : \FormulaType} 
	{ \Gamma \typeOf \app \inference{e_1}{e_2} : \RuleType}
\qquad
\ninference{t-seq}
	{\Gamma \typeOf \app e_1 : \LanguageType \\\\ \Gamma \typeOf \app e_2 : \LanguageType } 
	{ \Gamma \typeOf \app e_1 ; e_2 : \LanguageType}
\qquad
\ninference{t-rule-comp}
	{ \Gamma \typeOf \app e_1 : \RuleType \\\\ \Gamma,\GammaForRule{r} \typeOf \app e_2 : \RuleType } 
	{ \Gamma \typeOf \app \ruleComposition{e_1}{ e_2} : \RuleType}
\\[1ex]
\ninference{t-selector}
	{\Gamma \typeOf \app e_1 : \List \app T \quad \Gamma \typeOfPattern \app p  : T \Rightarrow \Gamma' \\
	\Gamma'' = \mbox{$ 
				\begin{cases} 
					\GammaForRule{r} & \mbox{if } T = \RuleType \\
					\selfLNC : T &  \mbox{otherwise}
				\end{cases}
				$}\\
				\Gamma, \Gamma', \Gamma'' \typeOf \app e_2 : \MaybeType{T'}} 
	{ \Gamma \typeOf \app \select{e_1}{p}{e_2} : \List \app T'}
\\[2ex]
\ninference{t-syntax-new \text{and} t-syntax-add}
	{\Gamma \typeOf \app e : \List \app \TermType} 
	{\mbox{$\begin{array}{c}
	  \Gamma \typeOf \app cname \app \langVar{X} ::= e : \LanguageType\\
	  \Gamma \typeOf \app cname \app \langVar{X}::= \ldots \app e: \LanguageType
	  \end{array}$
	  }}
%
\qquad 
\ninference{t-cname}
{}
{\Gamma \typeOf cname : \List \app \TermType}
\\[2ex]
\ninference{t-getRules}
{}
{\Gamma \typeOf \getRulesLNC : \List \app \RuleType}
\qquad 
\ninference{t-setRules}
	{\Gamma \typeOf \app e : \List \app \RuleType}
	{\Gamma \typeOf \app \setRulesLNC{e} : \LanguageType}
\\[2ex]
\ninference{t-uniquefy}
	{\Gamma \typeOf \app e_1 : \List\app\FormulaType \\\\
	\Gamma \typeOf \app e_2 : \MapType{T'}{(\List \app \StringType)} \quad T' = \OpnameType \text{ or } T' = \PrednameType \\\\
	\Gamma, x: \List\app\FormulaType,  y: \MapType{\TermType}{(\List\app \TermType)} \typeOf \app e_3 : T 
	} 
	{ \Gamma \typeOf \app \uniquefy{e_1}{e_2}{str}{x}{y}{e_3} : T}
\qquad 
\begin{array}{c}
\ninference{t-skip}
{}
{\Gamma \typeOf \app \skipLNC: \LanguageType}
\\[2ex]
\ninference{t-newvar}
{}
	{ \Gamma \typeOf \app \newVar : \TermType}
\end{array}\\[2ex]
\text{where } \GammaForRule{r} \equiv \selfLNC : \RuleType, \premisesLNC : \List \app \FormulaType,  \conclLNC: \FormulaType
\end{gather*}
\caption{Type System of $\calcName$}
\label{fig:typesystem}
\end{figure}

In this section we define a type system for $\calcName$. 
Types are defined as follows
{\small
\begin{syntax}
  \text{\sf Type} & T & ::= &  \LanguageType  \mid  \RuleType \mid \FormulaType \mid \TermType  \\ 
  &&& \List\app T \mid \MapType{T}{T} \mid \MaybeType{T}\mid \StringType \mid \OpnameType \mid \PrednameType\\
   \text{\sf Type Env} & \Gamma & ::= &  \emptyset \mid \Gamma, x:T 
\end{syntax}
}

\noindent We have a typical type environment that maps variables to types. 
Fig. \ref{fig:typesystem} shows the type system. 
The typing judgement $\typeOf  V ; \LangDef ; e$ means that the configuration $V ; \LangDef ; e$ is well-typed. 
This judgment checks that the variables of $V$ and those in $\LangDef$ are disjoint. 
This is an invariant that ensures that $\newVar$ always produces fresh names. 
We also check that $\LangDef$ is well-typed and that $e$ is of type $\LanguageType$.

We type check expressions with the typing judgement $\Gamma \typeOf \app e : T$, which means that $e$ has type $T$ under the assignments in $\Gamma$. 
%
Most typing rules are straightforward. We omit rules about lists and maps because they are standard. We comment only on the rules that are more involved. \textsc{(t-selector)} type checks a selector operation. We use $\Gamma \typeOfPattern \app p  : T \Rightarrow \Gamma'$ to type check the pattern $p$ and return the type environment for the variables of the pattern. Its definition is standard and is omitted. 
When we type check the body $e_2$ we then include $\Gamma'$. If the elements of the list are rules then we also include $\GammaForRule{r}$ to give a type to the variables for referring to the current rule. Otherwise, we assign $\selfLNC$ the type of the element of the list. Selectors with \key{keep} are analogous and omitted. 
\textsc{(t-rule-comp)} type checks a rule composition. In doing so, we type check the second expression with $\GammaForRule{r}$. 
\textsc{(t-uniquefy)} type checks the \key{uniquefy} operation. As we rename variables depending on the position they hold in terms and formulae, the keys of the map are of type $\OpnameType$ or $\PrednameType$, and values are strings. We type check $e_3$ giving $x$ the type of list of formulae, and $y$ the type of a map from meta-variables to list of meta-variables. 

We have proved that $\calcName$ is type sound. 

\begin{theorem}[Type Soundness]
For all $\Gamma$, $V$, $\LangDef$, $e$, if $\typeOf V ; \LangDef ; e$ then $V ; \LangDef ; e \step^{*} V' ; \LangDef' ; e'$ s.t. i) $e' = \skipLNC$, ii) $e' = \errorLNC$, or iii) $V' ; \LangDef' ; e' \step V'' ; \LangDef'' ; e''$, for some $e''$.
\end{theorem}

The proof is by induction on the derivation $\typeOf V ; \LangDef ; e$, and follows the standard approach of Wright and Felleisen \cite{WrightFelleisen94} through a progress theorem and a subject reduction theorem. 
%
The proof can be found in Appendix \ref{proof}. 

%
%
%
%


%
%

\section{Examples}\label{examples}

We show the applicability of $\calcName$ with two examples of language transformations: adding subytyping \cite{tapl} and switching to big-step semantics \cite{Kahn87}. 
 In the code we use let-binding, pattern-matching, and an overlap operation that returns true if two terms have variables in common. These operations can be easily defined in $\calcName$, and we show them in Appendix \ref{let}. 
The code below defines the transformation for adding subtyping. We assume that two maps are already defined, $mode = \{\typeOf\app\mapsto[``inp",\app``inp",\app``out"]\}$ and $variance = \{\to\app \app\mapsto[``contra",\app ``cova"]\}$. 
\lstset
{
    numbers=left,
    stepnumber=1,
}
\begin{lstlisting}[mathescape=true]
$\key{setRules}$
 $\key{getRules}(\key{keep})[(\typeOf [\Gamma, e, T])]:$
 $\key{uniquefy}(premises, mode, ``out") => (uniq, newpremises):$
  $\underline{newpremises \app @ \app \key{concat}(\key{mapKeys}(uniq)[T_f]: \key{fold} <: uniq(T_f))}$
                 $conclusion$
                     $;_{\textsf{r}}$
  $\key{concat}(premises(\key{keep})[T_1 <: T_2]:$
   $premises[(\typeOf [\Gamma, e_v, (c_v \app Ts_v)])]:$
    $\key{let}\app vmap = \key{map}(Ts_v,\app variance(c_v)) \app\key{in}\app$
    $\key{if}\app vmap(T_1) = ``contra"  \app\key{then}\app T_2 <: T_1$
    $\underline{\key{else}\app\key{if}\app vmap(T_1) = ``inv" \app\key{and}\app vmap(T_2) = ``inv" \app\key{then}\app T_1 = T_2 \app\key{else}\app T_1 <: T_2)}$                                                                     $conclusion$
                     $;_{\textsf{r}}$
   $\key{let}\app \mathit{outputVars} = \key{match}\app conclusion \app\key{with}\app (\typeOf [\Gamma, e_c, T_c]) \Rightarrow \varsLNC{T_c} \app\key{in}\app$
   $\key{let}\app joins = \key{mapKeys}(uniq)[T_i]:$
     $\key{if}\app T_i \app\key{in}\app \mathit{outputVars} \app\key{then}\app (\sqcup\app uniq(T_i)\app =\app T_i) \app\key{else}\app \nothingLNC$
   $\key{in}\app\underline{premises  \app @ \app joins}$
        $conclusion$
\end{lstlisting}
Line 1 updates the rules of the language with the rules computed by the code in lines 2-17. 
Line 2 selects all typing rules, and each of them will be the subject of the transformations in lines 3-17.  
Line 3 calls $\key{uniquefy}$ on the premises of the selected rule. We instruct $\key{uniquefy}$ to give new variables to the outputs of the typing relation $\typeOf$, if they are used  more than once in that position. 
As previously described, $\key{uniquefy}$ returns the list of new premises, which we bind to $\mathit{newpremises}$, and the map that assigns variables to the list of the new variables generated to replace them, which we bind  to $uniq$. 
The body of $\key{uniquefy}$ goes from line 4 to 17. 
Lines 4 and 5 build a new rule with the conclusion of the selected rule (line 5). It does so using the special variable name \emph{conclusion}. The premises of this rule include the premises just generated by $\key{uniquefy}$. Furthermore, we add premises computed as follows. With $\key{mapKeys}(uniq)[T_f]$, we iterate over all the variables replaced by $\key{uniquefy}$. We take the variables that replaced them and use fold to relate them all with subtyping. In other words, for each $\{T\app\mapsto[T_1, \ldots, T_n]\}$ in $uniq$, we have the formulae $T_1 <: T_2, \ldots, T_{n-1} <: T_n$.
This transformation has created a rule with unique outputs and subtyping, but subtyping may be incorrect because if some variable is contravariant its corresponding subtyping premise should be swapped. 
Lines 7-11, then, adjust the subtyping premises based on the variance of types. Line 7 selects all subtyping premises of the form $T_1 <: T_2$. For each, Line 8 selects typing premises with output of the form $(c_v \app Ts_v)$. We do so to understand the variance of variables. If the first argument of $c_v$ is contravariant, for example, then the first element of $Ts_v$ warrants a swap in a subtyping premise because it is used in the contravariant position. 
We achieve this by creating a map that associates the variance to each argument of $c_v$. The information about the variance for $c_v$ is in $variance$.  
If $T_1$ or $T_2$ (from the pattern of the selected premise) appear in $Ts_v$ then they find themselves with a variance assigned in $vmap$. 
Lines 10-11 generate a new premise based on the variance of variables. For example, if $T_1$ is contravariant then we generate $T_2 <: T_1$.

The program written so far (lines 1-11) is enough to add subtyping to several typing rules. For example, \textsc{(t-app)} can be transformed into \textsc{(t-app')} with this program. 
However, some typing rules need a more sophisticated algorithm. Below is the typing rule for if-then-else on the left, and its version with subtyping on the right, which makes use of the join operator ($\sqcup$) (see, \cite{tapl}). 
\[
{\footnotesize 
\begin{array}{ccc}
{
\inference
	{
	\Gamma \typeOf \app e_1 : \Bool \\
	\Gamma \typeOf \app e_2 : T	\quad
	\Gamma \typeOf \app e_2 : T	
	} 
	{ \Gamma \typeOf \app (\mathit{if} \app e_1\app e_2\app e_3) : T}
}
& ~~ \Longrightarrow  ~~ &
{
\inference
  {\Gamma \typeOf e_1 : \Bool \qquad \Gamma \typeOf e_2 : T_1 \\\\ \Gamma \typeOf e_3 : T_2 \qquad T_1 \sqcup T_2 = T
  }
  {\Gamma \typeOf (\mathit{if} \app e_1 \app e_2 \app e_3) : T}
}
\end{array}
}
\]
If we removed $T_1 \sqcup T_2$ the meta-variable $T$ would have no precise instantiation because its counterpart variables have been given new names. 
Lines 13-17 accommodate for cases of the like. Line 13 saves the variables that appear the output type of the rule in \emph{outputVar}. We then iterate over all the keys of $uniq$, that is, the variables that have been replaced. For each of them, we see if they appear in \emph{outputVar}. If so then we create a join operator with the variables newly generated to replace this variable, which can be retrieved from $uniq$. We set the output of the join operator to be the variable itself, because that is the one used in the conclusion. 

The algorithm above shows that \key{uniquefy} is a powerful operation of $\calcName$. 
To illustrate \key{uniquefy} further, let us consider a small example before we address big-step semantics. 
Suppose that we would like to make every test of equality explicit. We therefore want to disallow terms such as $(\key{op}\app e\app e\app e)$ to appear in the premises, and want to turn them into $(\key{op}\app e_1\app e_2\app e_3)$ together with premises $e_1 = e_2$ and $e_2 = e_3$. In $\calcName$ we can do this in the following way. Below we assume that the map \emph{allOps} maps each operator to the string ``yes" for each of its arguments. This instructs \key{uniquefy} to look for every argument. 

\lstset
{
    numbers=left,
    stepnumber=1,
}
\begin{lstlisting}[mathescape=true]
...
 $\key{uniquefy}(premises, allOps, ``yes") => (uniq, newpremises):$
  ${newpremises \app @ \app \key{concat}(\key{mapKeys}(uniq)[T_f]: \key{fold} = uniq(T_f))}$
\end{lstlisting}
Below, we show the code to turn language definitions into big-step semantics.
\lstset
{
    numbers=left,
    stepnumber=1,
}
\begin{lstlisting}[mathescape=true]
$\key{setRules}$
 $\mathit{Value}[v]: v \step v\app @$
 $\key{getRules}(\key{keep})[(op \app es) \app\step\app et]:$
 $\key{if}\app \key{isEmpty}(\select{Expression}{(op\app \_)}{self})\app\key{then}\app \key{nothing} \app\key{else}\app$
 $\key{let}\app v_{res} = \key{newVar}\app\key{in}$
 $\key{let}\app emap = \key{createMap}((\select{es}{e}{\newVar}), es) \app\key{in}$
  $(\key{mapKeys}(emap)[e]: \key{if}\app \key{isVar}(emap(e))\app \key{and}\app \key{not}(emap(e)\app\key{in}\app \varsLNC{et})$
                 $\key{then}\app \key{nothing}  \app\key{else}\app e \step emap(e) )$
  $\underline{@\app (\key{if}\app \key{not}(et \app\key{in}\app es)\app\key{then}\app [(et \app\step\app v_{res})] \app\key{else} \app \key{nil}) \app @ \app premises\qquad\qquad\qquad ~~~}$
  ${(op \app (\key{mapKeys}(emap))) \app\step\app \key{if}\app\key{not}(et \app\key{in}\app es)\app\key{then}\app v_{res} \app\key{else}\app et}$
\end{lstlisting}

\lstset
{
    numbers=left,
    stepnumber=1,
    basicstyle=\footnotesize,
}

%
Line 1 updates the rules of the language with the list computed in lines 2-9. Line 2 generates reduction rules such as $\lambda x.e \step \lambda x. e$, for each value, as it is standard in big-step semantics. These rules are appended to those generated in lines 3-9. Line 3 selects all the reduction rules. Line 4 leaves out those rules that are not about a top-level expression operator. This skips contextual rules that take a step $E[e] \step E[e']$, which do not appear in big-step semantics. 
To do so, line 4 make use of $\select{\emph{Expression}}{(op\app \_)}{\emph{self}})$. 
As $op$ is bound to the operator we are focusing on (from line 2), this selector returns a list with one element if $op$ appears in \emph{Expression}, and an empty list otherwise. 
This is the check we perform at line 4. 
Line 5 generates a new variable that will store the final value of the step. 
Line 6 assigns a new variable to each of the arguments in $(es)$. We do so creating a map \emph{emap}. 
These new variables are the formal arguments of the new rule being generated (Line 9). 
Line 7-8 makes each of these variables evaluate to its corresponding argument in $es$ (line 8). For example, for the beta-reduction an argument of $es$ would be $\lambda x. e$ and we therefore generate the premise $e_1 \step \lambda x. e$, where $e_1$ is the new variable that we assigned to this argument with line 6. 
Line 7 skips generating the reduction premise if it is a variable that does not appear in $e_t$. For example, in the translation of \textsc{(if-true)} $(\mathit{if} \app true \app e_2 \app e_3) \step e_2$ we do not evaluate $e_3$ at all. 
Line 9 handles the result of the overall small-step reduction. This result is evaluated to a value ($v_{res}$), unless it already appears in the arguments $es$. 
The conclusion of the rule syncs with this, and we place $v_{res}$ or $e_t$ in the target of the step accordingly. 
Line 9 also appends the premises from the original rule, as they contain conditions to be checked. 

When we apply this algorithm to the simply typed $\lambda$-calculus with if-then-else 
 we obtain: (we use standard notation rather than $\calcName$ syntax)
{\footnotesize
\begin{align*}
%
(\lambda x.e \app v) \step e[v/x]
\qquad&\Rightarrow\qquad
\inference
  {e_1' \step \lambda x.e & e_2' \step v& e[v/x]\step v_{res}}
   {(e_1'\app e_2')\step v_{res}}
\\[2ex]
(\mathit{if} \app true \app e_1 \app e_2) \step e_1
\qquad&\Rightarrow\qquad
\inference 
  {e_1' \step {true} 
  & e_2' \step e_2}
  {(\mathit{if} \app e_1' \app e_2' \app e_3') \step e_2}
\end{align*}
}

We have implemented $\calcName$ and we have applied it to the examples in this paper as well as $\lambda$-calculi with lists, pairs, sums, options, let-binding, function composition $(g\circ f)(x)$, and System F. We also considered these calculi in both call-by-value and call-by-name version, as well as lazy evaluation for data types such as pairs and lists. 
The languages produced by our tool are compiled to $\lambda$-prolog, which type checks them successfully and, in fact, can execute them. We have tested subtyping with simple programs and checked that this functionality has been added. 
We have also tested big-step evaluations with simple programs and our tests evaluate to the expected values in one step. 

The tool, repo of languages generated, and details of our tests can be found at the website of our tool \cite{ltr}. 

\section{Related Work}\label{related}

An excellent classification of language transformations has been provided in \cite{ErdwegGR12}. 
The paper defines five operations: language extension, language unification, language restriction, self-extension of an (embedded) language, and support for composition of extensions.
Language workbenches (Rascal, Spoofax, etcetera) implement these types of transformations and similar ones. 
These transformations are coarse grained in nature because they do not access the components of languages with precision. $\calcName$, instead, includes operations to scan rules, and select/manipulate formulae and terms with precision. In this regard, we offer low-level manipulations, and yet programmers can enjoy a rather declarative language. We are not aware of calculi that provide these features. We are also not aware of type soundness proofs of similar calculi.  

Proof assistants are optimized for handling inductive (rule-based) definitions, and can automatically generate powerful inductive reasoning mechanisms from these definitions. 
$\calcName$ does not provide these features, and does not assist language designers with their proofs. On the other hand, proof assistants do not have reflective features for programmatically retrieving their own inductive definitions, selected by a pattern, and for manipulating them to form a different specification, which is instead characteristic of $\calcName$. It would be interesting to explore the merging of the features of proof assistants and $\calcName$ in one tool. 
Another limitation of $\calcName$ compared to proof assistants (and general-purpose programming languages) is that $\calcName$ does not offer recursion but only a simple form of iteration. 
%

We are not aware of algorithms that automatically add subtyping. Instead, there has been some work on deriving big-step semantics, which we discuss below. 
The work of Danvy et al \cite{danvy2004refocusing,Danvy:2008,danvy:reduction-free} offers a comprehensive translation from small-step to big-step semantics. The approach derives small-step abstract machines first, which are then translated into to big-step abstract machines and finally into big-step reduction semantics. The approach is rather elaborate and involves techniques such as refocusing and transition compression. It would be interesting to express these algorithms in $\calcName$, extending the calculus if needed. 
Our algorithm differs slightly from that of Ciob\^ac\u{a} \cite{Ciobaca13}. 
Rules in \cite{Ciobaca13} have been proved correct. 
We do not offer correctness theorems about our algorithms, and we have shown them solely to demonstrate the kinds of manipulations that our calculus offers.

\section{Conclusion}\label{conclusion}

We have presented $\calcName$, a calculus for expressing language transformations. The calculus is expressive enough to model interesting transformations such as adding subtyping, and switching from small-step to big-step semantics. We have proved the type soundness of $\calcName$, and we have implemented the calculus in a tool. 
As $\calcName$ manipulates inference systems it can, in principle, be applied to logical systems, and we plan to explore this research venue. 
Overall, we believe that the calculus offers a rather declarative style for manipulating languages. 


\bibliographystyle{splncs04}
\bibliography{all.bib} 

\appendix

\section{Evaluation Contexts}\label{evaluationcontexts}

\begin{syntax}
   \text{\sf Evaluation Contexts} & E & ::= & \consLNC{E}{e}\mid \consLNC{v}{E}\mid \head \app E \mid \tail \app E \mid E @ e \mid v @ E \\
   &&& \mid \createMapLNC{E}{e} \mid \createMapLNC{v}{E} \mid  E(e)\mid  v(E)\mid  \mapKeysLNC{E}\\
   &&& \mid \justLNC{E} \mid \getLNC{E}\\
   &&& \mid cname \app \langVar{X} ::= E  \app  \mid cname \app \langVar{X} ::= \ldots \app E \\
   &&&\mid  \setRulesLNC{E}\\
   &&& \mid \select{E}{p}{e} \mid \select{E(\key{keep})}{p}{e} \\
    &&& \mid \uniquefy{E}{e}{str}{x}{x}{e}\\
    &&& \mid \uniquefy{v}{E}{str}{x}{x}{e} \\
    &&&\mid \foldLNC{predname}{E}\\
   &&& \mid  E \app ; \app e \mid \ruleComposition{E}{e}   \\
   &&&  \mid \varsLNC{E}  \mid \tickedMMLNC{E}{e} \\ 
   &&&\mid \ifLang{E == e}{e}{e}\mid \ifLang{v ==E}{e}{e}\\
   &&&\mid \ifLang{ \isEmptyLNC{E}}{e}{e}\\
   &&&\mid \ifLang{ \isIn{E}{e}}{e}{e}\mid \ifLang{ \isIn{v}{E}}{e}{e}\\
   &&&\mid \ifLang{ \isNothing{E}}{e}{e}\\
   &&& \mid \inference{e}{E} \mid \inference{E}{v}\\
   &&&  \mid predname \app E\\
   &&& \mid  opname\app E \mid  (\langVar{X})E  \mid E[e/\langVar{X}] \mid v[E/\langVar{X}] 
\end{syntax}

\newpage

\section{Rest of Operational Semantics}\label{app:operational}

We provide the reduction rules for the selector with keep, fold, and the tick operator. 

\begin{gather*}
	V ; \LangDef \app ; \app \select{\emptyListLNC\key{(keep)}}{p}{e} \step\atomic  V ; \LangDef \app ; \app \emptyListLNC 	\label{beta}\tagsc{r-selector-nil}\\[2ex]
	\inference{\matchLNC{v_1}{p} = \theta \qquad
	\theta' = \mbox{$\begin{cases}
	                              \subsForRule{r}
	                              & \mbox{if } v_1=r\\ 
	                          \{\selfLNC \mapsto v_1\} & \mbox{otherwise}
	                        \end{cases}
	                     $}}
	                     {V ; \LangDef \app ; \app \select{(\consLNC{v_1}{v_2})\key{(keep)}}{p}{e} \step\atomic  V ; \LangDef \app ; \app 
	                     (\consLNCFilterName\app {e\theta\theta'}\app {(\select{v_2}{p}{e})})} 	
	                     \label{beta}\tagsc{r-selector-cons-ok}\\[1ex]
	\inference{\matchLNC{v_1}{p} \not= \theta}  
	                     {V ; \LangDef \app ; \app \select{(\consLNC{v_1}{v_2})\key{(keep)}}{p}{e} \step\atomic  V ; \LangDef \app ; \app 
	                     \key{cons}\app v_1\app (\select{v_2}{p}{e})} 	
	                     \label{beta}\tagsc{r-selector-cons-fail}\\[1ex]
	\begin{array}{c}
	V ; \LangDef \app ; \app \foldLNC{predname}{(\consLNC{v_1}{(\consLNC{v_2}{v_3})})} \\ \step\atomic  \\ V ; \LangDef \app ; \app (\consLNC{(predname \app [v_1, v_2]}(\foldLNC{predname}{(\consLNC{v_2}{v_3})}))	\label{r-fold-2}\tagsc{r-fold-2}
	\end{array}\\[1ex]
	V ; \LangDef \app ; \app \foldLNC{predname}{(\consLNC{v}{\emptyListLNC})} \step\atomic  V ; \LangDef \app ; \app \emptyListLNC	\label{r-fold-1}\tagsc{r-fold-1}\\[1ex]
	V ; \LangDef \app ; \app \foldLNC{predname}{\emptyListLNC} \step\atomic  V ; \LangDef \app ; \app \emptyListLNC	\label{r-fold-0}\tagsc{r-fold-0}\\[1ex]
		V ; \LangDef \app ; \app \tickedMMLNC{(opname\app v_1)}{v_2} \step\atomic  V ; \LangDef \app ; \app (opname \app \tickedMMLNC{ v_1}{v_2})	\label{r-tick-opname}\tagsc{r-tick-opname}\\
	V ; \LangDef \app ; \app \tickedMMLNC{\langVar{X}}{v_2} \step\atomic  V ; \LangDef \app ; \app \ifLang{\isIn{\langVar{X}}{\key{vars}(v_2)}}{\tickSEM{\langVar{X}}}{\langVar{X}} 	\label{r-tick-var}\tagsc{r-tick-var}\\
	V ; \LangDef \app ; \app \tickedMMLNC{((\langVar{z})t)}{v_2} \step\atomic  V ; \LangDef \app ; \app (\langVar{z})(\tickedMMLNC{t}{v_2})	\label{r-tick-abs}\tagsc{r-tick-abs}\\
	V ; \LangDef \app ; \app \tickedMMLNC{(t_1[t_2/\langVar{z})]}{v_2} \step\atomic  V ; \LangDef \app ; \app (\tickedMMLNC{t_1}{v_2})[(\tickedMMLNC{t_2}{v_2})/\langVar{z}]	\label{r-tick-sub}\tagsc{r-tick-sub}\\
	V ; \LangDef \app ; \app \tickedMMLNC{(\consLNC{v_1}{v_2})}{v_3} \step\atomic  V ; \LangDef \app ; \app (\consLNC{(\tickedMMLNC{v_1}{v_3})}{(\tickedMMLNC{v_2}{v_3})}) 	\label{r-tick-cons}\tagsc{r-tick-cons}\\
	V ; \LangDef \app ; \app \tickedMMLNC{\emptyListLNC}{v_2} \step\atomic  V ; \LangDef \app ; \app \emptyListLNC \label{r-tick-nil}\tagsc{r-tick-nil}\\
\end{gather*}

The reduction rules for lists, maps, options and if-then-else are standard and are omitted. We omit the reduction rules for vars() because they are straightforward. They traverse terms and list of terms in a similar way as the tick operator, though they expose the variables met along the way. 

\newpage
\section{Uniquefy}\label{uniquefy}

{\small
\begin{gather*}
\begin{array}{l}
\uniquefySEM\subListFormula{\emptyListLNC}{m}{str}{mr} \equiv (\emptyListLNC, mr) \\
\uniquefySEM\subListFormula{(\consLNC{v}{lv})}{m}{str}{mr} \equiv \\
\qquad \letSEM\app (v', mr') = \uniquefySEM\subFormula{v}{m}{str}{mr} \app \inSEM \app \\
\qquad\qquad\qquad \letSEM \app (lv', mr'') \eqSEM \uniquefySEM\subListFormula{lv}{m}{str}{mr'} \app \inSEM \app ((\consLNC{v'}{lv'}), mr'')\\
\uniquefySEM\subFormulaAndTerm{(name \app v)}{m}{str}{mr} \equiv (\text{where } name \text{ is } predname \text{ or } opname)\\
\qquad\ifSEM \app name \app \inSEM \app m \app  \thenSEM \app \ifSEM  \app\checkzipSEM{v}{m.name}  \app\thenSEM \app \\
\qquad\qquad\qquad \letSEM \app (v', mr') \eqSEM  \uniquefySEMfound{(\zipSEM{v}{m.name}}{m}{str}{mr} \app \inSEM \app ((name \app v'), mr') \app \\
\qquad\qquad\qquad\qquad\qquad\elseSEM \app \failureSEM\\
\qquad \elseSEM \app \letSEM \app (v', mr') \eqSEM \uniquefySEM\subListTerm{v}{m}{str}{mr} \app \inSEM \app ((name \app v'), mr')\\
\uniquefySEM\subTerm{X}{m}{str}{mr} \equiv (X, mr)\\    
\uniquefySEM\subTerm{x)t}{m}{str}{mr} \equiv ((x)t , mr)\\
\uniquefySEM\subTerm{t[t/x]}{m}{str}{mr} \equiv ( t[t/x], mr)\\[2ex]
\uniquefySEMfound{\emptyListLNC}{m}{str}{mr} \equiv (\emptyListLNC, mr)  \\
\uniquefySEMfound{(\consLNC{(v,str')}{lv})}{m}{str}{mr} \equiv  \\
\qquad\letSEM \app (lv', mr') \eqSEM  \uniquefySEMfound{lv}{m}{str}{mr} \app \inSEM \app\\
\qquad\qquad\qquad \ifSEM \app str \app \eqSEM\eqSEM str' \thenSEM \app  ((\consLNC{v}{lv'}),mr') \\
\qquad\qquad \qquad\quad\elseSEM \app \letSEM \app (v', mr'') \eqSEM \uniquefySEMreplace\subTerm{v}{mr'} \app \inSEM ((\consLNC{v'}{lv'}),mr'')\\[2ex]
\uniquefySEMreplace\subFormulaAndTerm{(name \app v)}{mr} \equiv (\text{where } name \text{ is } predname \text{ or } opname)\\
\qquad\letSEM \app (v', mr') \eqSEM \uniquefySEMreplace\subTerm{v}{mr} \app \inSEM \app ((name \app v'), mr')\\
\uniquefySEMreplace\subTerm{{X}}{\createMapLNC{keys}{values}\app as \app mr} \equiv \\
\qquad\letSEM \app (keys', values') \eqSEM  remove(X,mr) \app \inSEM \app\\
\qquad\ifSEM \app X \app \inSEM \app mr \app  \thenSEM \app (\tickSEM{\lastSEM{mr.X}}, \createMapLNC{keys' \app @ \app [X]}{values' \app @ \app [\tickSEM{\lastSEM{mr.X}}]})\\ 
\qquad\qquad \qquad\quad\elseSEM (\tickSEM{X}, \createMapLNC{keys' \app @ \app [X]}{values' \app @ \app [\tickSEM{X}]})\\
\uniquefySEMreplace\subTerm{(x)tv}{mr} \equiv \letSEM \app (tv', mr') \eqSEM \uniquefySEMreplace\subTerm{tv}{mr} \app \inSEM \app((x)tv' , mr')\\
\uniquefySEMreplace\subTerm{tv_1[tv_2/x]}{mr} \equiv 
\qquad \letSEM\app (tv_1', mr') \eqSEM \uniquefySEM\subTerm{tv_1}{m}{str}{mr}\app \inSEM \\
\qquad\qquad\qquad \letSEM \app (tv_2', mr'') \eqSEM \uniquefySEM\subTerm{tv_2}{m}{str}{mr'} \app \inSEM \app (tv_1'[tv_2'/x], mr'')
\end{array}
\end{gather*}
}

\noindent $remove(t,m)$ removes the key $t$ and its corresponding value in the map $m$. \\
$last(l)$ returns the last element of a list, and $@$ is list append. 

The function is mostly a straightforward recursive traverse of terms, formulae, list of terms and list of formulae. 
The only elements to notice are that when $\mathit{uniquefy}$ detects a context that potentially contain the string $str$ then it switches to $\mathit{uniquefy}^{\bullet}$, which is a meta-operation that seeks for $str$. In turn, when $\mathit{uniquefy}^{\bullet}$ finds an argument in a position prescribed by $str$, then it switches to $\mathit{uniquefy}^{\dagger}$, which is a meta-operation that is responsible for actually replace variables and record the association. $\mathit{zip}$ is a meta-operation that combines two lists. Of course, it may fail if the two lists do not have the same length. This happens in the scenario described above about $\to$ and its number of argumets. $\mathit{CheckZip}$ performs just that check and can make the function fail. 

\section{Proof of Type Soundness}\label{proof}

\subsection{Progress Theorem}

\begin{theorem}[Canonical Form Lemmas]

\begin{itemize}
\item $\emptyset \typeOf \app e : \LanguageType$, and $e$ is a value then $e = \skipLNC$.
\item $\emptyset \typeOf \app e : \RuleType$, and $e$ is a value then $e = r$.
\item $\emptyset \typeOf \app e : \FormulaType$, and $e$ is a value then $e = f$.
\item $\emptyset \typeOf \app e : \TermType$, and $e$ is a value then $e = t$.
\item $\emptyset \typeOf \app e : \List \app T$, and $e$ is a value then $e = \nil \app \lor e = \cons{\app v_1}{v_2}$.
\item $\emptyset \typeOf \app e : \MapType{T_1}{T_2}$, and $e$ is a value then $e = \createMapLNC{v_1}{v_2}$.
\item $\emptyset \typeOf \app e : \MaybeType{T_1}{T_2}$, and $e$ is a value then $e = \nothingLNC\app \lor  = \justLNC{v}$.
\item $\emptyset \typeOf \app e : \StringType$, and $e$ is a value then $e = str$.
\item $\emptyset \typeOf \app e : \OpnameType$, and $e$ is a value then $e = opname$.
\item $\emptyset \typeOf \app e : \PrednameType$, and $e$ is a value then $e = predname$.
\end{itemize}

\end{theorem}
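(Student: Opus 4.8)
\emph{Proof plan.} Every one of the ten clauses has the same shape: from $\emptyset \typeOf v : T$ with $v$ a value, conclude that $v$ has a prescribed head constructor. The plan is to prove all ten simultaneously by inversion on the typing derivation. Fix a value $v$ and a derivation of $\emptyset \typeOf v : T$, and inspect the rule at its root. Because the type environment is empty, \textsc{(t-var)} cannot be the last rule. Because $v$ is a value, every typing rule whose subject expression is headed by a non-value operator is excluded outright: \textsc{(t-seq)}, \textsc{(t-rule-comp)}, \textsc{(t-selector)} and its \key{keep} variant, \textsc{(t-syntax-new)}, \textsc{(t-syntax-add)}, \textsc{(t-getRules)}, \textsc{(t-setRules)}, \textsc{(t-uniquefy)}, \textsc{(t-newvar)}, and the (standard, omitted) rules for the list, map, option, and boolean operators. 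The rules that survive partition according to the value form they apply to: \textsc{(t-meta-var)}, \textsc{(t-opname)}, \textsc{(t-opname-var)}, \textsc{(t-abs)}, \textsc{(t-subs)} apply exactly when $v$ is a term $t$, and each concludes with type $\TermType$; \textsc{(t-predname)}, \textsc{(t-predname-var)} apply when $v$ is a formula $f$, with type $\FormulaType$; \textsc{(t-rule)} applies when $v$ is a rule $r$, with type $\RuleType$; \textsc{(t-skip)} applies when $v = \skipLNC$, with type $\LanguageType$; \textsc{(t-cname)} applies when $v$ is a category name $cname$, with type $\List\app\TermType$; and the omitted rules for list, map, option, and string values cover the types $\List\app T$, $\MapType{T_1}{T_2}$, $\MaybeType{T}$, and $\StringType$. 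Reading this table from a given type $T$ back to the admissible value forms yields exactly the ten implications.

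\emph{Execution.} Concretely, for each clause I would (i) invert $\emptyset \typeOf v : T$; (ii) discard the non-value-headed rules as above; and (iii) among the survivors observe that each rule fixes both a value constructor and the type it assigns, so matching against the target $T$ pins down a unique head constructor. No induction is needed: in the compound cases the immediate sub-components are already values by the value grammar, and each clause constrains only the outermost constructor, so the whole lemma reduces to a finite, one-level case check.

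\emph{Main point of care.} The case analysis itself is routine; the only genuine work is exhaustiveness --- verifying that every typing rule, including those the paper declares ``standard and omitted,'' lands in exactly one of the two groups (non-value-headed, or matched to a value form). Before running it I would reconcile a couple of mismatches between the printed value grammar and the ten clauses: the $\OpnameType$ and $\PrednameType$ clauses mention atomic value forms $opname$ and $predname$ that the value grammar omits, which I would read as atomic values with the obvious omitted typing rules (making those clauses identical to the $\StringType$ case); and since \textsc{(t-cname)} gives $\emptyset \typeOf cname : \List\app\TermType$ while $cname$ is listed as a value, the $\List\app T$ clause should additionally permit $v = cname$ (alternatively, $cname$ should be dropped from the value grammar and always reduced via \textsc{(r-cname-ok)}/\textsc{(r-cname-fail)}). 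With those adjustments the rest is mechanical inversion.
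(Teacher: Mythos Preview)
Your approach is correct and is exactly the paper's: the paper's entire proof reads ``Each case is proved by case analysis on $\emptyset \vdash e : T$. Each case is straightforward,'' and your inversion-then-exclude-non-value-rules plan is precisely that case analysis spelled out. Your additional observations about the $cname$/$\List\app\TermType$ overlap and the missing $opname$/$predname$ value forms are genuine wrinkles the paper glosses over, not errors in your argument.
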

\begin{proof}
Each case is proved by case analysis on $\emptyset \typeOf \app e : T$. Each case is straightforward. 
\end{proof}

\begin{theorem}[Progress Theorem Expressions]
For all , if $\emptyset \typeOf e : T$ then either 
\begin{itemize}
\item $e = v$, or 
\item $e = \errorLNC$, or 
\item for all $V, \LangDef$, $V ; \LangDef ; e \step V' ; \LangDef' ; e'$, for some $V',\LangDef',e'$.
\end{itemize}
\end{theorem}
\begin{proof}
We prove the theorem by induction on the derivation of $\emptyset \typeOf e: T$. 
Let us assume the proviso of the theorem, that is (H1) $\emptyset \typeOf e: T$. 
\begin{case}[\textsc{t-opname}] 
Since (H1) then we have $e= (opname \app e)$ with $\emptyset \typeOf e: \List\app \TermType$.
By IH, we have that 
\begin{itemize}
\item $e = v$. Then we have $(opname \app v)$ which is a value. 
\item $e = \errorLNC$. Then we have $(opname \app \errorLNC)$ and by \textsc{ctx-err} we have \\$V ; \LangDef \app ; \app (opname \app \errorLNC) \step V ; \LangDef \app ; \app \errorLNC$, for all $V, \LangDef$ because of the evaluation context $(opname \app E)$.
\item for all $V, \LangDef$, $V ; \LangDef ; e \step V' ; \LangDef' ; e'$, for some $V',\LangDef',e'$. Then $(opname \app e)$ takes a step by \textsc{ctx-succ} or \textsc{ctx-lang-err}. 
\end{itemize}
\end{case}
\begin{case}[\textsc{t-rule-comp}] 
Since (H1) then we have $e=(\ruleComposition{e_1}{e_2})$ with $\emptyset \typeOf e_1: \List\app \RuleType$.
By IH, we have that 
\begin{itemize}
\item $e_1 = v$. Then we have $\ruleComposition{v}{e_2}$ and \ref{r-rule-comp} applies. 
\item $e_1 = \errorLNC$. Then we have $\ruleComposition{\errorLNC}{e_2}$ and by \textsc{ctx-err} we have \\$V ; \LangDef \app ; \app \ruleComposition{\errorLNC}{e_2} \step V ; \LangDef \app ; \app \errorLNC$, for all $V, \LangDef$ because of the evaluation context $\ruleComposition{E}{e}$.
\item for all $V, \LangDef$, $V ; \LangDef ; e_1 \step V' ; \LangDef' ; e_1'$, for some $V',\LangDef',e'$. Then $\ruleComposition{e_1}{e_2}$ takes a step by \textsc{ctx-succ} or \textsc{ctx-lang-err}. 
\end{itemize}
\end{case}
\begin{case}[\textsc{t-seq}] 
Since (H1) then we have $e= e_1; e_2$ with $\emptyset \typeOf e_1: \LanguageType$.
By IH, we have that 
\begin{itemize}
\item $e_1 = v$. By Canonical form $e_1 = \skipLNC$. Then we have $\skipLNC; e_2$ which by  takes a step. 
\item $e_1 = \errorLNC$. Then we have $\errorLNC; e_2$ and by \textsc{ctx-err} we take a step to an error. 
\item for all $V, \LangDef$, $V ; \LangDef ; e_1 \step V' ; \LangDef' ; e_1'$, for some $V',\LangDef',e'$. Then by \textsc{ctx-succ} or \textsc{ctx-lang-err}, we take a step. 
\end{itemize}
\end{case}
\begin{case}[\textsc{t-selector}] 
Since (H1) then we have that $\emptyset \typeOf e_1: \List \app T$.
By IH, we have that 
\begin{itemize}
\item $e_1 = v$. By Canonical form $e_1$ can have two forms:
\begin{itemize} 
\item $e_1 =  \nil$ Then we apply \textsc{r-selector-nil} takes a step. 
\item $e_1 = \cons{\app v_1}{v_2}$. Then we have two cases: ether $\matchLNC{v_1,p}$ succeeds, then we apply \textsc{r-selector-cons-ok} and take a step, or $\matchLNC{v_1,p}$ fails, then we apply \textsc{r-selector-cons-fail} and take a step. 
\end{itemize}
\item $e_1 = \errorLNC$. Then by \textsc{ctx-err} we take a step to an error. 
\item for all $V, \LangDef$, $V ; \LangDef ; e_1 \step V' ; \LangDef' ; e_1'$, for some $V',\LangDef',e'$. Then by \textsc{ctx-succ} or \textsc{ctx-lang-err}, we take a step. 
\end{itemize}
The case for selectors with keep are analogous. 
\end{case}
\begin{case}[\textsc{t-uniquefy}] 
Since (H1) then we have that $\emptyset \typeOf e_1: \RuleType$, $\emptyset \typeOf e_2: \MapType{\OpnameType}{(\List\app\StringType)}$.
By IH on $e_1$, we have that 
\begin{itemize}
\item $e_1 = v_1$. By Canonical form $e_1 = r$. By IH on $e_2$ we have the three cases: 
\begin{itemize} 
\item $e_2 =  v_2$. Then we there are two cases: Either $\mathit{uniquefy}$ succeeds and we apply \textsc{r-uniquefy-ok} to take a step, or fails and we apply \textsc{r-uniquefy-fail} to take a step.
\item $e_2 = \errorLNC$. Then by \textsc{ctx-err} we take a step to an error. 
\item for all $V, \LangDef$, $V ; \LangDef ; e_2 \step V' ; \LangDef' ; e_2'$, for some $V',\LangDef',e_2'$. Then by \textsc{ctx-succ} or \textsc{ctx-lang-err}, we take a step. 
\end{itemize}
\item $e_1 = \errorLNC$. Then by \textsc{ctx-err} we take a step to an error. 
\item for all $V, \LangDef$, $V ; \LangDef ; e_1 \step V' ; \LangDef' ; e_1'$, for some $V',\LangDef',e_1'$. Then by \textsc{ctx-succ} or \textsc{ctx-lang-err}, we take a step. 
\end{itemize}
The case of $\emptyset \typeOf e_2: \MapType{\PrednameType}{(\List\app\StringType)}$ is analogous.
\end{case}
\begin{case}[\textsc{t-tick}] 
Since (H1) then we have that $\emptyset \typeOf e_1: T $, $\emptyset \typeOf e_2:\List\app\TermType$.
By IH on $e_1$, we have that 
\begin{itemize}
\item $e_1 = v_1$. By IH on $e_2$:
    \begin{itemize} 
    \item $e_2 = v_2$. Then we have two cases depending on $T$: 
        \begin{itemize} 
        \item $T =  \TermType$. By Canonical forms, we have that $e_1$ can be of the following forms:
            \begin{itemize} 
            \item $(opname \app v_1')$. Then we apply \ref{r-tick-opname} and take a step. 
            \item $\langVar{X}$. Then we apply \ref{r-tick-var} and take a step. 
            \item $(\langVar{z}v_1')$. Then we apply \ref{r-tick-abs} and take a step. 
            \item $v_1'[v_1''/\langVar{z}]$. Then we apply \ref{r-tick-sub} and take a step. 
            \end{itemize}
        \item $T = \List\app\TermType$. By Canonical form $e_1$ can have two forms:
            \begin{itemize} 
            \item $e_1 =  \nil$. Then we apply \ref{r-tick-nil} takes a step. 
            \item $e_1 = \cons{v_1}{v_2}$. Then we apply \ref{r-tick-cons} takes a step. 
            \end{itemize}
        \end{itemize}
    \item $e_2 = \errorLNC$. Then  by \textsc{ctx-err} we take a step to an error. 
    \item for all $V, \LangDef$, $V ; \LangDef ; e_2 \step V' ; \LangDef' ; e_2'$, for some $V',\LangDef',e_2'$. Then by \textsc{ctx-succ} or \textsc{ctx-lang-err}, we take a step. 
    \end{itemize}
    \item $e_1 = \errorLNC$. Then by \ref{ctx-err} we take a step to an error. 
    \item for all $V, \LangDef$, $V ; \LangDef ; e_1 \step V' ; \LangDef' ; e_1'$, for some $V',\LangDef',e_1'$. Then by \textsc{ctx-succ} or \textsc{ctx-lang-err}, we take a step. 
\end{itemize}
\end{case}

All other cases follow similar lines as above. 
\qed
\end{proof}

\begin{theorem}[Progress Theorem for Configurations]
For all , if $\emptyset \typeOf V ; \LangDef ; e$ then either 
\begin{itemize}
\item $e = \skipLNC$, or 
\item $e = \errorLNC$, or 
\item $V ; \LangDef ; e \step V' ; \LangDef' ; e'$, for some $e'$.
\end{itemize}
\end{theorem}
\begin{proof}
Let us assume the proviso: $\emptyset \typeOf V ; \LangDef ; e$. Then we have $\emptyset \typeOf  e : \LanguageType$.
By Progress Theorem for Expressions, we have that 
\begin{itemize}
\item $e = v$. By Canonical forms, $e = \skipLNC$.
\item $e = \errorLNC$, which satisfies the theorem. 
\item $V ; \LangDef ; e \step V' ; \LangDef' ; e'$, for some $e'$, which satisfies the theorem. 
\end{itemize}
\qed
\end{proof}

\subsection{Subject Reduction Theorem}

\begin{lemma}[Substitution Lemma]
if $\Gamma, x:T \typeOf e: T'$ and $\emptyset \typeOf v: T$ then $\Gamma \typeOf e[v/x]: T'$.
\end{lemma}
\begin{proof}
The proof is by induction on the derivation of $\Gamma, x:T \typeOf e: T'$. 
As usual, the case for variables \ruletag{t-var} relies on a standard weakening lemma: $\Gamma \typeOf e: T'$ and $x$ is not in the free variables of $e$ then $\Gamma, x:T \typeOf e: T'$, which can be proved by induction on the derivation of $\Gamma \typeOf e: T'$. 
An aspect that differs from a standard proof is that our substitution does not replace all instances of variables $\selfLNC$, $\premisesLNC$, and $\conclLNC$ in certain context. Then extra care must be taken in the substitution lemma because the substituted expression may still have those as free variables. The type system covers for those cases because it augments the type environment with $\GammaForRule{r}$. 

\qed
\end{proof}

\begin{lemma}[Pattern-matching typing and reduction]\label{thm:pattern-matching}
if $\emptyset \typeOfPattern p: T \Rightarrow \Gamma'$ and $\matchLNC{v}{p} = \theta$ then for all $x:T'\in \Gamma'$, $[x/v'] \in \theta$ and $\emptyset \typeOf v':T'$. 
\end{lemma}
\begin{proof}
The proof is by induction on the derivation of $\emptyset \typeOfPattern p: T \Rightarrow \Gamma'$. Each case is straightforward. 
\qed
\end{proof}

\begin{lemma}[$\uniquefySEMJustName\subListFormula$ produces well-typed results or fails]\label{uniquefyCorrectness}
\item $\uniquefySEM\subListFormula{lf}{m}{str}{mr} = res$ 
\begin{itemize}
\item $res =(lf',mr')$ such that\\ $\emptyset \typeOf lf' : \List\app\FormulaType$, and $\emptyset \typeOf mr' : \MapType{\VarType}{(\List\app\TermType)}$. 
\item $res = \failureSEM$. 
\end{itemize}
\end{lemma}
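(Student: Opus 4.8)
The statement is to be read together with the natural well-typedness hypotheses on its inputs, without which it would not hold: $lf : \List\app\FormulaType$, $m : \MapType{T'}{(\List\app\StringType)}$ with $T' \in \{\OpnameType, \PrednameType\}$, and the accumulator $mr$ a well-typed range map, i.e.\ $mr : \MapType{\VarType}{(\List\app\TermType)}$; the conclusion is that $res$ is either such a pair $(lf', mr')$ or $\failureSEM$. The plan is to prove this by a single simultaneous induction covering all five mutually recursive meta-operations defined in Appendix~\ref{uniquefy}: $\uniquefySEMJustName\subListFormula$ (input a list of formulae, output a list of formulae plus a range map), $\uniquefySEMJustName\subFormula$ / $\uniquefySEMJustName\subFormulaAndTerm$ / $\uniquefySEMJustName\subTerm$ / $\uniquefySEMJustName\subListTerm$ (input a formula / term / list of terms, output a value of the same shape plus a range map), $\mathit{uniquefy}^{\bullet}$ (input a list whose entries pair a term with a string, output a list of terms plus a range map), and $\mathit{uniquefy}^{\dagger}$ (input a term or a formula, output a value of the same shape plus a range map). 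In every case I additionally maintain the strengthened invariant that a range map has only meta-variable keys and only non-empty lists of meta-variables as values, so that $\lastSEM{\cdot}$ is always defined on the ranges that get looked up; this is what makes the $\mathit{uniquefy}^{\dagger}$ case go through. The induction is on the size of the formula, term, or list being traversed, which strictly decreases on every recursive call across the whole family.

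First I would record a handful of routine auxiliary facts: a meta-variable is a term by \ruletag{t-meta-var}; $\mathit{tick}$ sends a meta-variable to a meta-variable, hence to a term; $remove$ applied to a well-typed range map yields a well-typed range map and, when the removed key is present, returns that key (of type $\VarType$) together with a non-empty list of meta-variables; list append preserves the property of being a list of terms; and $\checkzipSEM{v}{l}$ holds exactly when $v$ and $l$ have equal length, in which case $\zipSEM{v}{l}$ is the list of componentwise pairs (otherwise the operation fails, which trivially satisfies the disjunction). With these in hand the inductive cases are mechanical: the $\nil$/$\cons$ clauses of $\uniquefySEMJustName\subListFormula$ and $\uniquefySEMJustName\subListTerm$ thread $mr$ through two appeals to the induction hypothesis (IH) and re-assemble a $\cons$; the $(\mathit{name}\app v)$ clause either takes the $\mathit{name}\notin m$ branch and recurses on the argument list, or takes the $\mathit{name}\in m$ branch, where $\checkzipSEM{v}{m.\mathit{name}}$ either fails (and we are done) or hands $\zipSEM{v}{m.\mathit{name}}$ to $\mathit{uniquefy}^{\bullet}$; the base clauses of $\uniquefySEMJustName\subTerm$ for $\langVar{X}$, $(x)t$ and $t[t/x]$ return the argument unchanged together with the unchanged $mr$; and $\mathit{uniquefy}^{\bullet}$ on a $\cons$ of a labelled term either keeps the term verbatim or delegates to $\mathit{uniquefy}^{\dagger}$, threading $mr$ in both cases. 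Throughout, whenever a sub-call returns $\failureSEM$ the enclosing let-binding propagates it, so the $\failureSEM$ disjunct is closed uniformly and needs no separate argument.

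The main obstacle is the meta-variable clause of $\mathit{uniquefy}^{\dagger}$, the only place where the range map is genuinely rewritten: there we compute $(keys', values') = remove(X, mr)$ and then, according to whether $X \in mr$, return $(\tickSEM{X},\, \createMapLNC{keys'\app @\app [X]}{values'\app @\app [\tickSEM{X}]})$ or $(\tickSEM{\lastSEM{mr.X}},\, \createMapLNC{keys'\app @\app [X]}{values'\app @\app [\tickSEM{\lastSEM{mr.X}}]})$. To close this case I would invoke the strengthened invariant: since $X \in mr$, the list $mr.X$ is a non-empty list of meta-variables, so $\lastSEM{mr.X}$ is a meta-variable and $\tickSEM{\lastSEM{mr.X}}$ — and likewise $\tickSEM{X}$ in the other branch — is again a meta-variable, hence a value of type $\TermType$ by \ruletag{t-meta-var}; moreover the rebuilt map takes $keys'$ (still all meta-variables, being a sublist of the keys of $mr$) extended with the meta-variable $X$, paired with $values'$ (still non-empty lists of meta-variables) extended with the singleton list of the freshly ticked meta-variable, so it is again a well-typed range map satisfying the invariant. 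The two remaining $\mathit{uniquefy}^{\dagger}$ clauses, for $(x)tv$ and $tv_1[tv_2/x]$, are immediate from the IH once the binder or substitution is reassembled. Finally, specializing the strengthened statement to $\uniquefySEM\subListFormula{lf}{m}{str}{\createMapLNC{[]}{[]}}$ — the form used in \ruletag{r-uniquefy-ok} and \ruletag{r-uniquefy-fail}, where the initial accumulator is the empty map and so vacuously satisfies the invariant — yields the lemma as stated.
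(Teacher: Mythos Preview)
Your proposal is correct and follows essentially the same strategy as the paper, which simply says ``straightforward induction on the definition of $\uniquefySEMJustName\subListFormula$'' and defers to analogous lemmas for $\uniquefySEMJustName\subFormula$, $\uniquefySEMJustName\subTerm$, and $\uniquefySEMJustName\subListTerm$, each proved by the same kind of induction. Your version is more careful in two respects the paper glosses over: you make the well-typedness hypotheses on the inputs explicit, and you identify and maintain the strengthened invariant (non-empty lists of meta-variables in the range map) needed for $\lastSEM{\cdot}$ to be defined in the $\mathit{uniquefy}^{\dagger}$ meta-variable clause; the paper's ``straightforward'' hides exactly this point.
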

\begin{proof}
Straightforward induction on the definition of $\uniquefySEMJustName\subListFormula$. 
Most cases rely on the analogous lemmas for formulae, terms, list of terms and list of formulae: 
\begin{itemize}
\item $\uniquefySEM\subFormula{f}{m}{str}{mr} = res$ 
\begin{itemize}
\item $res =(f',mr')$ such that \\$\emptyset \typeOf f' : \FormulaType$, and $\emptyset \typeOf mr' : \MapType{\TermType}{(\List\app\TermType)}$. 
\item $res = \failureSEM$. 
\end{itemize}
\item $\uniquefySEM\subTerm{t}{m}{str}{mr} = res$ 
\begin{itemize}
\item $res =(t',mr')$ such that \\$\emptyset \typeOf t' : \TermType$, and $\emptyset \typeOf mr' : \MapType{\TermType}{(\List\app\TermType)}$. 
\item $res = \failureSEM$. 
\end{itemize}
\item $\uniquefySEM\subListTerm{lt}{m}{str}{mr} = res$ 
\begin{itemize}
\item $res =(lt',mr')$ such that \\$\emptyset \typeOf lt' : \List\app\TermType$, and $\emptyset \typeOf mr' : \MapType{\TermType}{(\List\app\TermType)}$. 
\item $res = \failureSEM$. 
\end{itemize}
\end{itemize}
Each can be proved with a straightforward induction on the definition of $\uniquefySEMJustName_{\mathcal{X}}$ where $\mathcal{X} \in \{\textsf{f,t,lt}\}$. 
\qed
\end{proof}

\begin{lemma}[Compositionality of $\typeOf$]\label{thm:compositionality}
if $\emptyset \typeOf E[e] : T$ then there exits $T'$ such that $\emptyset \typeOf e : T'$ and for all $e'$ if $\emptyset \typeOf e' : T'$ then $\emptyset \typeOf E[e'] : T$.
\end{lemma}
\begin{proof}
Proof is by induction on the structure of $E$. Each case is straightforward. 
\end{proof}

\begin{theorem}[Subject Reduction ($\step\atomic$)]
$V \cap \varsSEM{\LangDef} = \emptyset $, $\emptyset \typeOf  e : T$, and $V ; \LangDef \app ; \app e\step\atomic V' ; \LangDef' \app ; \app e'$ then $V' \cap \varsSEM{\LangDef'} = \emptyset$, and $\emptyset \typeOf  e' : T$. 
\end{theorem}

\begin{proof}
Let us assume the proviso of the theorem, that is, (H1) $V \cap \varsSEM{\LangDef} = \emptyset $, (H2) $\Gamma \typeOf  e : T$, and (H3) $V ; \LangDef \app ; \app e\step V' ; \LangDef' \app ; \app e'$. 
Case analysis on (H3). 
\begin{case}[\textsc{r-seq-ok}]
$V ; \LangDef \app ; \app (\skipLNC ; e) \step  V ; \LangDef \app ; \app e$. 
We need to prove 
$\varsSEM{\LangDef} \subseteq V$, which we already have by (H1).  We need to prove $\typeOf  \LangDef$, which we have by (H2). 
We have to prove that $\Gamma \typeOf  e: T$ where $\Gamma \typeOf  (\skipLNC ; e): T$. By \textsc{t-seq} we have that $\Gamma \typeOf  (\skipLNC ; e): \LanguageType$ (i.e. $T = \LanguageType$), $\Gamma \typeOf  e : \LanguageType$. 
\end{case}
%
%
\begin{case}[\textsc{r-newar}]
$V ; (G,R) \app ; \app \newVar\app str \step\atomic  V \cup \{\langVar{X'}\} ; \LangDef \app ; \app \langVar{X'}$. 
We need to prove 
$\varsSEM{\LangDef} \subseteq V \cup \{\langVar{X'}\}$, which we have because by (H1), we have that $\varsSEM{\LangDef} \subseteq V$, and we additionally we have that $\langVar{X'} \not\in \varsSEM{\LangDef}$. 
We have to prove that $\Gamma \typeOf  \langVar{X'}: \TermType$ because $\Gamma \typeOf  \newVar : \TermType$. This holds thanks to \textsc{t-metaVar}. 
\end{case}
\begin{case}[\textsc{r-rule-comp}]
$V ; \LangDef \app ; \app \ruleComposition{v}{e}  \step\atomic  V ; \LangDef \app ; \app e\subsForRule{v}$. 
We need to prove 
$V \cap \varsSEM{\LangDef} = \emptyset $, which we have because by (H1). 
We have to prove that (\#) $\emptyset \typeOf  e\subsForRule{v}:  \RuleType$ when (*) $\emptyset \typeOf  \ruleComposition{v}{e} : \RuleType$. From (*) we infer (HRULE) $\emptyset \typeOf v : \RuleType$ and $\GammaForRule{v} \typeOf \app e : \RuleType$, that is (HE) $\selfLNC : \RuleType, \premisesLNC : \List \app \FormulaType,  \conclLNC: \FormulaType \typeOf \app e : \RuleType$. 
By Canonical Form Lemma, from (HRULE) we infer that $v = \inference{v_1}{v_2}$ and since it is typeable (HRULE), by \textsc{t-rule} have $\emptyset\typeOf v_1: \List \app \FormulaType$ and $\emptyset\typeOf v_2 :\FormulaType$.
$e\subsForRule{v} = e[r/\selfLNC,v_1/ \premisesLNC  , v_2/\conclLNC  ] = e[v/\selfLNC][v_1/\premisesLNC][v_2/\conclLNC]$.

Given (HE), and given (HRULE), by Substitution Lemma we have ($HE_1$) $ \premisesLNC : \List \app \FormulaType,  \conclLNC: \FormulaType \typeOf \app e[v/\selfLNC] : \RuleType$.

Given ($HE_1$), and given $\Gamma\typeOf v_1: \List \app \FormulaType$ , by Substitution Lemma we have ($HE_2$) $  \conclLNC: \FormulaType \typeOf \app e[v/\selfLNC][v_1/\premisesLNC] : \RuleType$.

Given ($HE_2$), and given $\emptyset \typeOf v_2:  \FormulaType$ , by Substitution Lemma we have  $\emptyset \typeOf \app e[v/\selfLNC][v_1/\premisesLNC][v_2/\conclLNC] : \RuleType$.
\end{case}
\begin{case}[\textsc{r-selector-nil}]
$V ; \LangDef \app ; \app \select{\emptyListLNC}{p}{e} \step\atomic  V ; \LangDef \app ; \app \emptyListLNC$. 
We need to prove 
$V \cap \varsSEM{\LangDef} = \emptyset $, which we have because by (H1). 
We have to prove that (\#) $\Gamma \typeOf  \emptyListLNC:  \List\app T$ when (*) $\Gamma \typeOf  \select{\emptyListLNC}{p}{e} : \List\app T$. Thanks to \textsc{t-emptyList} this holds. 
\end{case}
\begin{case}[\textsc{r-selector-cons-ok}]
$V ; \LangDef \app ; \app \select{(\consLNC{v_1}{v_2})}{p}{e} \step\atomic $\\
$ V ; \LangDef \app ; \app (\consLNCFilterName\app {e\theta\theta'}\app {(\select{v_2}{p}{e})})$.
We need to prove 
$V \cap \varsSEM{\LangDef} = \emptyset $, which we have because by (H1). 
We have to prove that (\#) $\emptyset \typeOf  (\consLNCFilterName\app {e\theta\theta'}\app {(\select{v_2}{p}{e})}) : \List\app T'$ when (*) $\Gamma \typeOf  \select{(\consLNC{v_1}{v_2})}{p}{e} : \List\app T'$. 
By \textsc{t-selector} we have that $\Gamma \typeOf (\consLNC{v_1}{v_2}) :  \List\app T$, and therefore by \textsc{t-cons}, we have that $\Gamma \typeOf v_1:T$. We do a case analysis on whether $T=\RuleType$ or not, to prove in both cases that $\Gamma \typeOf e\theta\theta' : \MaybeType{T'}$. 
\begin{itemize}
\item $T=\RuleType$: By Canonical Form, then we have that $v_1 = \inference{v_1'}{v_2'}$. Then $\theta' =  [v_1/\selfLNC,v_1'/ \premisesLNC  , v_2'/\conclLNC  ]$. From (*) we infer that $ \Gamma', \selfLNC : \RuleType, \premisesLNC : \List \app \FormulaType,  \conclLNC: \FormulaType \typeOf \app e_2 : \MaybeType{T'}$, where $\Gamma'$ comes from the pattern-matching. 
By applying the same reasoning as in \ref{r-rule-comp}, we can apply the Substitution lemma three times to have $\Gamma' \typeOf e\theta' : \MaybeType{T'}$. By Lemma \ref{thm:pattern-matching} (pattern-matching correctness) we have that for all $(x:T'')\in\Gamma'$ there is $[x/v'']\in\theta$ such that $\emptyset\typeOf v'' : T''$. Then, for all such $(x:T'')\in\Gamma'$ we can use the Substitution Lemma to substitute its $[x/v'']$, and end up with $\emptyset \typeOf e\theta\theta' : \MaybeType{T'}$.
\item $T\not=\RuleType$: Then $\theta' =  [v_1/\selfLNC]$ and by Substitution lemma we have $\Gamma' \typeOf e\theta' : \MaybeType{T'}$. By pattern-matching correctness, the same reasoning as in the previous case leads us to $\emptyset \typeOf e\theta\theta' : \MaybeType{T'}$.
\end{itemize}
As now we know that (*) $\Gamma \typeOf e\theta\theta' : \MaybeType{T'}$ in all cases. 
If we expand $(\consLNCFilterName\app {e\theta\theta'}\app {(\select{v_2}{p}{e})})$ we have \\
$\consLNCFilter{e\theta\theta'}{(\select{v_2}{p}{e})}$. 
Here $\key{isNothing}$ and $\key{get}$ are applied to $e\theta\theta'$ of type $\MaybeType{T'}$, therefore are well-typed. Also, both branches of the if return an expression of type $\List \app T'$.
\end{case}
\begin{case}[\textsc{r-uniquefy-ok}]
$V ; \LangDef \app ; \app \uniquefy{\mathit{lf}}{v_1}{str}{x}{y}{e} \step\atomic  V ; \LangDef \app ; \app e[\mathit{lf}'/x,v_2/y] $. 
We need to prove 
$V \cap \varsSEM{\LangDef} = \emptyset $, which we have because by (H1). 
We have to prove that (\#) $\Gamma \typeOf e[\mathit{lf}'/x,v_2/y]: T$ when (*) $\Gamma \typeOf \uniquefy{r}{v}{str}{x}{y}{e} : T$.
By \textsc{r-uniquefy-ok} we have $(r', m) = \uniquefySEM\subRule{r}{v}{str}{\emptyMapLNC}$, and by Lemma \ref{uniquefyCorrectness} we have that \\
$\emptyset \typeOf \mathit{lf'} : \List \app\FormulaType$, and $\emptyset \typeOf v_2 : \MapType{\TermType}{(\List\app\TermType)}$. 
By \textsc{t-uniquefy} we have that $\Gamma, x: \List\app\FormulaType,  y: \MapType{\TermType}{(\List\app \TermType)} \typeOf \app e_3 : T $. 
By Substitution Lemma, we then have $\Gamma \typeOf e[\mathit{lf}'/x,v_2/y] : T$. 
\end{case}


All other cases are analogous. 
\end{proof}

\begin{theorem}[Subject Reduction ($\step$)]
For all $V$, $V'$, $\LangDef$, $\LangDef'$, $e$, $e'$, if $\emptyset \typeOf V ; \LangDef ; e$ and $V ; \LangDef ; e \step V' ; \LangDef' ; e'$ then $\emptyset \typeOf V' ; \LangDef' ; e'$.
\end{theorem}

\begin{proof}
Let us assume the proviso of the theorem and have (H1) $\emptyset \typeOf V ; \LangDef ; e$ and $V ; \LangDef ; e \step V' ; \LangDef' ; e'$.
The proof is by case analysis on the derivation of $V ; \LangDef ; e \step V' ; \LangDef' ; e'$

\begin{case}[\textsc{ctx-succ}]
$V ; \LangDef ; E[e] \step V' ; \LangDef' ; E[e']$ when (H2) $V ; \LangDef ; e \step\atomic V' ; \LangDef' ; e'$. 
From (H1) we know that (H6) $\emptyset \typeOf E[e] : T$, for some $T$. Then we can apply Lemma \ref{thm:compositionality} to have that (H3) $\emptyset \typeOf e : T'$, for some $T'$.
With (H2) and (H3) we can apply Subject Reduction for $\step\atomic$ and obtain that (H4) $\emptyset \typeOf e : T'$ and (H5) $V'\cap \varsSEM{\LangDef'} = \emptyset$. 
By Lemma \ref{thm:compositionality}, since we have (H6) and (H4) we can derive  $\emptyset \typeOf E[e'] : T$, and since we have (H5), we can derive $\emptyset \typeOf V ; \LangDef ; E[e']$.
\end{case}
\begin{case}[\textsc{ctx-lang-err}]
$V ; \LangDef \app ; \app E[e]\step V ; \LangDef \app ; \app \errorLNC$. (H1) implies $V \cap \varsSEM{\LangDef} = \emptyset$ and $\typeOf  \LangDef$. We need to prove 
$V \cap \varsSEM{\LangDef} = \emptyset$, which we already have, and $\typeOf  \LangDef$, which we already have. We need to prove $\Gamma \typeOf  \errorLNC: \LanguageType$, which we can prove with \textsc{(t-error)}. 
\end{case}
\begin{case}[\textsc{ctx-err}]
Similar lines as \textsc{ctx-lang-err}. 
\end{case}
\end{proof}

\subsection{Type Soundness}

\begin{theorem}[Type Soundness]
For all $\Gamma$, $V$, $\LangDef$, $e$, if $\typeOf V ; \LangDef ; e$ then $V ; \LangDef ; e \step^{*} V' ; \LangDef' ; e'$ s.t. i) $e' = \skipLNC$, ii) $e' = \errorLNC$, or iii) $V' ; \LangDef' ; e' \step V'' ; \LangDef'' ; e''$, for some $e''$.
\end{theorem}

The proof is straightforward once we have the Subject Reduction ($\step$) theorem, and the Progress for Configuration theorem, and that typeability is preserved in multiple steps (provable by straightforward induction on the derivation of $\step^{*}$).

\section{Let-Binding and Match in $\calcName$}\label{let}

$\key{let} \app x = e_1 \app \key{in}\app e_2 \equiv \key{head}\app (\select{[e_1]}{x}{e_2}) $\\

\noindent The pattern-matching that we use is unary-branched and either succeeds or throws an error. 

$\key{match} \app e_1 \app \key{with}\app p \Rightarrow e_2 \equiv \\
\key{let} \app x = (\select{[e_1]}{p}{e_2}))  \app \key{in}\app \key{if}\app (\key{isEmpty} \app x)\app \key{then}\app \key{error} \app\key{else}\app \key{head}\app x$
\end{document}

The proof is by induction on the derivation of $\emptyset \typeOf e : T$. The proof follows standard lines. 

\begin{theorem}[Subject Reduction ($\step\atomic$)]
$V \cap \varsSEM{\LangDef} = \emptyset $, $\emptyset \typeOf  e : T$, and $V ; \LangDef \app ; \app e\step\atomic V' ; \LangDef' \app ; \app e'$ then $V' \cap \varsSEM{\LangDef'} = \emptyset$, and $\emptyset \typeOf  e' : T$. 
\end{theorem}

\section{Meta-theory}

\begin{lemma}[Substitution Lemma]
if $\Gamma, x:T \typeOf e: T'$ and $\emptyset \typeOf v: T$ then $\Gamma \typeOf e[v/x]: T'$.
\end{lemma}
\begin{proof}
The proof is by induction on the derivation of $\Gamma, x:T \typeOf e: T'$. 
As usual, the case for variables (\ref{t-var}) relies on a standard weakening lemma: $\Gamma \typeOf e: T'$ and $x\not\in FV(e)$ then $\Gamma, x:T \typeOf e: T'$, which can be proved by induction on the derivation of $\Gamma \typeOf e: T'$. 
\qed
\end{proof}

\begin{lemma}[Pattern-matching typing and reduction]\label{thm:pattern-matching}
if $\emptyset \typeOfPattern p: T \Rightarrow \Gamma'$ and $\matchLNC{v}{p} = \theta$ then for all $x:T'\in \Gamma'$, $[x/v'] \in \theta$ and $\emptyset \typeOf v':T'$. 
\end{lemma}
\begin{proof}
The proof is by induction on the derivation of $\emptyset \typeOfPattern p: T \Rightarrow \Gamma'$. Each case is straightforward. 
\qed
\end{proof}

\begin{lemma}[$\uniquefySEMJustName\subRule$ produces well-typed results or fails]\label{uniquefyCorrectness}
$\uniquefySEM\subRule{r}{m}{str}{mr} = res$ with 
\begin{itemize}
\item $res =(r',mr')$ such that $\emptyset \typeOf r' : \RuleType$, and $\emptyset \typeOf mr' : \MapType{\VarType}{(\List\app\VarType)}$. 
\item $res = \failureSEM$. 
\end{itemize}
\end{lemma}
\begin{proof}
Straightforward induction on the definition of $\uniquefySEMJustName$ of Figure \ref{fig:uniquefy}. 
Most cases rely on the analogous lemmas for formulae, terms, list of terms and list of formulae: 
\begin{itemize}
\item $\uniquefySEM\subFormula{f}{m}{str}{mr} = res$ 
\begin{itemize}
\item $res =(f',mr')$ such that $\emptyset \typeOf f' : \FormulaType$, and $\emptyset \typeOf mr' : \MapType{\VarType}{(\List\app\VarType)}$. 
\item $res = \failureSEM$. 
\end{itemize}
\item $\uniquefySEM\subListFormula{lf}{m}{str}{mr} = res$ 
\begin{itemize}
\item $res =(lf',mr')$ such that $\emptyset \typeOf lf' : \List\app\FormulaType$, and $\emptyset \typeOf mr' : \MapType{\VarType}{(\List\app\VarType)}$. 
\item $res = \failureSEM$. 
\end{itemize}
\item $\uniquefySEM\subTerm{t}{m}{str}{mr} = res$ 
\begin{itemize}
\item $res =(t',mr')$ such that $\emptyset \typeOf t' : \TermType$, and $\emptyset \typeOf mr' : \MapType{\VarType}{(\List\app\VarType)}$. 
\item $res = \failureSEM$. 
\end{itemize}
\item $\uniquefySEM\subListTerm{lt}{m}{str}{mr} = res$ 
\begin{itemize}
\item $res =(lt',mr')$ such that $\emptyset \typeOf lt' : \List\app\TermType$, and $\emptyset \typeOf mr' : \MapType{\VarType}{(\List\app\VarType)}$. 
\item $res = \failureSEM$. 
\end{itemize}
\end{itemize}
Each can be proved with a straightforward induction on the definition of $\uniquefySEMJustName_{\mathcal{X}}$ where $\mathcal{X} \in \{\textsf{f,lf,t,lt}\}$. 
\qed
\end{proof}

\begin{lemma}[Compositionality of $\typeOf$]\label{thm:compositionality}
if $\emptyset \typeOf E[e] : T$ then there exits $T'$ such that $\emptyset \typeOf e : T'$ and for all $e'$ if $\emptyset \typeOf e' : T'$ then $\emptyset \typeOf E[e'] : T$.
\end{lemma}
\begin{proof}
Proof is by induction on the structure of $E$. Each case is straightforward. 
\end{proof}

\begin{theorem}[Subject Reduction ($\step\atomic$)]
$V \cap \varsSEM{\LangDef} = \emptyset $, $\emptyset \typeOf  e : T$, and $V ; \LangDef \app ; \app e\step\atomic V' ; \LangDef' \app ; \app e'$ then $V' \cap \varsSEM{\LangDef'} = \emptyset$, and $\emptyset \typeOf  e' : T$. 
\end{theorem}

\begin{theorem}[Subject Reduction ($\step$)]
For all $V$, $V'$, $\LangDef$, $\LangDef'$, $e$, $e'$, if $\emptyset \typeOf V ; \LangDef ; e$ and $V ; \LangDef ; e \step V' ; \LangDef' ; e'$ then $\emptyset \typeOf V' ; \LangDef' ; e'$.
\end{theorem}

\begin{definition}[Substitutions for Category Names]
\begin{itemize}
\item $cname : \List\app\TermType \in \GammaForCnames{G}$ iff $cname \app \langVar{X} ::= lt \in G$.
\item $\subsForCnames{G}$ is the composition of all substitutions $[lt/cname]$ such that $cname \app \langVar{X} ::= lt \in G$.
\end{itemize}
\end{definition}

\begin{theorem}[Type Soundness]
For all $\Gamma$, $V$, $\LangDef = (G,R)$, $e$, if $\GammaForCnames{G} \typeOf V ; \LangDef ; e$ then $V ; \LangDef ; e\subsForCnames{G} \step^{*} V' ; \LangDef' ; e'$ such that  
\begin{itemize}
\item $e' = \skipLNC$, or 
\item $e' = \errorLNC$, or 
\item $V' ; \LangDef' ; e' \step V'' ; \LangDef'' ; e''$, for some $e''$.
\end{itemize}
\end{theorem}

	                     	V ; \LangDef \app ; \app \select{\emptyListLNC(\key{keep})}{p}{e} \step\atomic  V ; \LangDef \app ; \app \emptyListLNC 	\label{beta}\tagsc{r-selector-nil-keep}\\
	\inference{\matchLNC{v_1}{p} = \theta \\ 
	\theta' = \mbox{$\begin{cases}
	                              \subsForRule{r}
	                              & \mbox{if } v_1=r\\ 
	                          \{\selfLNC \mapsto v_1\} & \mbox{otherwise}
	                        \end{cases}
	                     $}}
	                     {V ; \LangDef \app ; \app \select{(\consLNC{v_1}{v_2})(\key{keep})}{p}{e} \step\atomic  V ; \LangDef \app ; \app \filterNothingLNC{(\consLNC{e\theta\theta'}{(\select{v_2}{p}{e}}))}} 	\label{beta}\tagsc{r-selector-cons-keep-ok}\\
	\inference{\matchLNC{v_1}{p} \not= \theta}  
	                     {V ; \LangDef \app ; \app \select{(\consLNC{v_1}{v_2})(\key{keep})}{p}{e} \step\atomic  V ; \LangDef \app ; \app \filterNothingLNC{((\consLNC{v_1}{(\select{v_2}{p}{e})}))}	}\label{beta}\tagsc{r-selector-cons-keep-fail}\\

\appendix 

\section{Proof of Subject Reduction}

\begin{theorem}[Subject Reduction ($\step\atomic$)]
$V \cap \varsSEM{\LangDef} = \emptyset $, $\emptyset \typeOf  e : T$, and $V ; \LangDef \app ; \app e\step\atomic V' ; \LangDef' \app ; \app e'$ then $V' \cap \varsSEM{\LangDef'} = \emptyset$, and $\emptyset \typeOf  e' : T$. 
\end{theorem}

\begin{proof}
Let us assume the proviso of the theorem, that is, (H1) $V \cap \varsSEM{\LangDef} = \emptyset $, (H2) $\Gamma \typeOf  e : T$, and (H3) $V ; \LangDef \app ; \app e\step V' ; \LangDef' \app ; \app e'$. 
Case analysis on (H3). 
\begin{case}[\ref{r-seq-ok}]
$V ; \LangDef \app ; \app (\skipLNC ; e) \step  V ; \LangDef \app ; \app e$. 
We need to prove 
$\varsSEM{\LangDef} \subseteq V$, which we already have by (H1).  We need to prove $\typeOf  \LangDef$, which we have by (H2). 
We have to prove that $\Gamma \typeOf  e: T$ where $\Gamma \typeOf  (\skipLNC ; e): T$. By \ref{t-seq} we have that $\Gamma \typeOf  (\skipLNC ; e): \LanguageType$ (i.e. $T = \LanguageType$), $\Gamma \typeOf  e : \LanguageType$. 
\end{case}
\begin{case}[\ref{r-rules-ok}]
$V ; (G,R) \app ; \app R' \step  V ; (G,R') \app ; \app \skipLNC$, with $\varsSEM{R'} \subseteq V$. 
We need to prove 
$\varsSEM{(G,R')} \subseteq V$, which we have because $\varsSEM{R'} \subseteq V$. 
We have to prove that $\Gamma \typeOf  \skipLNC: \List\app \RuleType$ or $\Gamma \typeOf  \skipLNC:  \LanguageType$ when $\Gamma \typeOf  R' : \List\app \RuleType$ or $\Gamma \typeOf  R' : \LanguageType$, respectively. This holds thanks to \ref{t-skip}. 
\end{case}
\begin{case}[\ref{r-rules-fail}]
$V ; (G,R) \app ; \app R'  \step\atomic  V ; (G,R) \app ; \app \errorLNC$. 
We need to prove 
$\varsSEM{(G,R)} \subseteq V$, which we have by (H1). 
We have to prove that $\Gamma \typeOf  \errorLNC: \List\app \RuleType$ or $\Gamma \typeOf  \errorLNC:  \LanguageType$ when $\Gamma \typeOf  R' : \List\app \RuleType$ or $\Gamma \typeOf  R' : \LanguageType$, respectively. This holds thanks to \ref{t-error}. 
\end{case}
\begin{case}[\ref{r-newar}]
$V ; (G,R) \app ; \app \newVar\app str \step\atomic  V \cup \{\langVar{X'}\} ; \LangDef \app ; \app \langVar{X'}$. 
We need to prove 
$\varsSEM{\LangDef} \subseteq V \cup \{\langVar{X'}\}$, which we have because by (H1), we have that $\varsSEM{\LangDef} \subseteq V$, and we additionally we have that $\langVar{X'} \not\in \varsSEM{\LangDef}$. 
We have to prove that $\Gamma \typeOf  \langVar{X'}: \VarType$ because $\Gamma \typeOf  R' : \List\app \RuleType$ or $\Gamma \typeOf  \newVar : \VarType$. This holds thanks to \ref{t-lang-var}. 
\end{case}
\begin{case}[\ref{r-unbound-abs}]
$V ; \LangDef \app ; \app \unboundLNC{(\langVar{x})t} \step\atomic  V ; \LangDef \app ; \app t$. 
We need to prove 
$\varsSEM{\LangDef} \subseteq V $, which we have because by (H1). 
We have to prove that $\Gamma \typeOf  t: \TermType$ because $\Gamma \typeOf  \unboundLNC{(\langVar{x})t}: \TermType$. By \ref{t-unbound} we have $\Gamma \typeOf  (\langVar{x})t : \TermType$. Now, by \ref{t-var-abs} we have $\Gamma \typeOf t : \TermType$. 
\end{case}
\begin{case}[\ref{r-unbound-noAbs}]
$V ; \LangDef \app ; \app \unboundLNC{t} \step\atomic  V ; \LangDef \app ; \app t$. 
We need to prove 
$\varsSEM{\LangDef} \subseteq V $, which we have because by (H1). 
We have to prove that $\Gamma \typeOf  t: \TermType$ because $\Gamma \typeOf  \unboundLNC{t}: \TermType$. By \ref{t-unbound} we have $\Gamma \typeOf  t : \TermType$. 
\end{case}
\begin{case}[\ref{r-fold-0}]
$V ; \LangDef \app ; \app \foldLNC{predname}{\emptyListLNC} \step\atomic  V ; \LangDef \app ; \app \emptyListLNC$. 
We need to prove 
$\varsSEM{\LangDef} \subseteq V $, which we have because by (H1). 
We have to prove that $\Gamma \typeOf  \emptyListLNC: \List \app \FormulaType$ because $\Gamma \typeOf  \foldLNC{predname}{\emptyListLNC} : \List \app \FormulaType$. This holds thanks to \ref{t-nil}.
\end{case}
\begin{case}[\ref{r-fold-1}]
$V ; \LangDef \app ; \app \foldLNC{predname}{(\consLNC{v}{\emptyListLNC})} \step\atomic  V ; \LangDef \app ; \app \emptyListLNC$. 
We need to prove 
$\varsSEM{\LangDef} \subseteq V $, which we have because by (H1). 
We have to prove that $\Gamma \typeOf  \emptyListLNC: \List \app \FormulaType$ because $\Gamma \typeOf  \foldLNC{predname}{(\consLNC{v}{\emptyListLNC})} : \List \app \FormulaType$. This holds thanks to \ref{t-nil}.
\end{case}
\begin{case}[\ref{r-fold-2}]
$V ; \LangDef \app ; \app \foldLNC{predname}{(\consLNC{v_1}{(\consLNC{v_2}{v_3})})} \step\atomic V ; \LangDef \app ; \app (\consLNC{(predname \app [v_1, v_2])}(\foldLNC{predname}{(\consLNC{v_2}{v_3})}))$. (Where $[v_1, v_2] = (\consLNC{v_1}{(\consLNC{v_2}{\emptyListLNC})})$).
We need to prove 
$\varsSEM{\LangDef} \subseteq V $, which we have because by (H1). 
We have to prove that (\#) $\Gamma \typeOf (\consLNC{(predname \app v_1\app v_2)}(\foldLNC{predname}{(\consLNC{v_2}{v_3})})): \List \app \FormulaType$ because (*) $\Gamma \typeOf  \foldLNC{predname}{(\consLNC{v_1}{(\consLNC{v_2}{v_3})})} : \List \app \FormulaType$. By \ref{t-fold}, from (*) we infer that $\Gamma \typeOf (\consLNC{v_1}{(\consLNC{v_2}{v_3})}) : \List \app \TermType$. In turn, by \ref{t-cons}, applied twice, we infer that $\Gamma \typeOf v_1 : \TermType$, $\Gamma \typeOf v_2 : \TermType$ and $\Gamma \typeOf v_3 : \List \app \TermType$. Now, by \ref{t-fold}, we have that  $\Gamma \typeOf (\foldLNC{predname}{(\consLNC{v_2}{v_3})})): \List \app \FormulaType)$ because $\Gamma \typeOf (\consLNC{v_2}{v_3}) : \List\app \TermType$. By \ref{t-formula} we have that  $\Gamma \typeOf (predname \app [v_1, v_2]) : \FormulaType$, hence (\#) holds. 
\end{case}
\begin{case}[\ref{r-getOverlap}]
$V ; \LangDef \app ; \app\getOverlap{v_1}{v_2} \step\atomic  V ; \LangDef \app ; \app \varsSEM{v_1} \cap \varsSEM{v_2}$. 
We need to prove 
$\varsSEM{\LangDef} \subseteq V $, which we have because by (H1). 
We have to prove that (\#) $\Gamma \typeOf  \varsSEM{v_1} \cap \varsSEM{v_2}: \List \app \VarType$ because $\Gamma \typeOf  \getOverlap{v_1}{v_2} : \List \app \VarType$. We can apply $\varsSEM{\cdot}$ because by \ref{t-getOverlap} we have that $\Gamma \typeOf v_1 : \TermType$ and $\Gamma \typeOf v_2 : \TermType$. Our meta-function $\cap$ returns a list of variables, hence (\#) holds. 
\end{case}
\begin{case}[\ref{r-filter-nil}]
$V ; \LangDef \app ; \app \filterNothingLNC{\emptyListLNC} \step\atomic  V ; \LangDef \app ; \app \emptyListLNC$. 
We need to prove 
$\varsSEM{\LangDef} \subseteq V $, which we have because by (H1). 
We have to prove that $\Gamma \typeOf  \emptyListLNC: \List \app T$ when $\Gamma \typeOf  \filterNothingLNC{\emptyListLNC}: \List \app  T$. This holds thanks to \ref{t-nil}.
\end{case}
\begin{case}[\ref{r-filter-nil}]
$V ; \LangDef \app ; \app \filterNothingLNC{(\consLNC{\nothingLNC}{v_2})} \step\atomic  V ; \LangDef \app ; \app \filterNothingLNC{v_2}$. 
We need to prove 
$\varsSEM{\LangDef} \subseteq V $, which we have because by (H1). 
We have to prove that (\#) $\Gamma \typeOf   \filterNothingLNC{v_2}: \List \app T$ when (*) $\filterNothingLNC{(\consLNC{\nothingLNC}{v_2})}: \List \app  T$. By \ref{t-filterNothing}, from (*) we have that $\Gamma \typeOf (\consLNC{\nothingLNC}{v_2}) : \List \app \MaybeType{T}$, hence $\Gamma \typeOf v_2 :  \List \app \MaybeType{T}$. We then can apply \ref{t-filterNothing} to have that (\#) holds.
\end{case}
\begin{case}[\ref{r-filter-cons-just}]
$V ; \LangDef \app ; \app \filterNothingLNC{(\consLNC{(\justLNC{v_1})}{v_2})} \step\atomic  V ; \LangDef \app ; \app (\consLNC{v_1}{(\filterNothingLNC{v_2})})$. 
We need to prove 
$\varsSEM{\LangDef} \subseteq V $, which we have because by (H1). 
We have to prove that (\#) $\Gamma \typeOf   (\consLNC{v_1}{(\filterNothingLNC{v_2})}): \List \app T$ when (*) $\filterNothingLNC{(\consLNC{(\justLNC{v_1})}{v_2})}: \List \app  T$. By \ref{t-filterNothing}, from (*) we have that $\Gamma \typeOf (\consLNC{(\justLNC{v_1})}{v_2}) : \List \app \MaybeType{T}$, hence $\Gamma \typeOf v_2 :  \List \app \MaybeType{T}$, and $\Gamma \typeOf v_1:T$. Then by \ref{t-filterNothing} we have that $\Gamma \typeOf \filterNothingLNC{v_2} :  \List \app T$. Then, by \ref{t-cons}, (\#) holds. 
\end{case}
\begin{case}[\ref{r-rule-comp}]
$V ; \LangDef \app ; \app \ruleComposition{v}{e}  \step\atomic  V ; \LangDef \app ; \app e\subsForRule{v}$. 
We need to prove 
$V \cap \varsSEM{\LangDef} = \emptyset $, which we have because by (H1). 
We have to prove that (\#) $\emptyset \typeOf  e\subsForRule{v}:  \RuleType$ when (*) $\emptyset \typeOf  \ruleComposition{v}{e} : \RuleType$. From (*) we infer (HRULE) $\emptyset \typeOf v : \RuleType$ and $\GammaForRule{v} \typeOf \app e : \RuleType$, that is (HE) $\selfLNC : \RuleType, \premisesLNC : \List \app \FormulaType,  \conclLNC: \FormulaType \typeOf \app e : \RuleType$. 
By Canonical Form Lemma, from (HRULE) we infer that $v = \inference{v_1}{v_2}$ and since it is typeable (HRULE), by \ref{t-rule} have $\emptyset\typeOf v_1: \List \app \FormulaType$ and $\emptyset\typeOf v_2 :\FormulaType$.
$e\subsForRule{v} = e[r/\selfLNC,v_1/ \premisesLNC  , v_2/\conclLNC  ] = e[v/\selfLNC][v_1/\premisesLNC][v_2/\conclLNC]$.

Given (HE), and given (HRULE), by Substitution Lemma we have ($HE_1$) $ \premisesLNC : \List \app \FormulaType,  \conclLNC: \FormulaType \typeOf \app e[v/\selfLNC] : \RuleType$.

Given ($HE_1$), and given $\Gamma\typeOf v_1: \List \app \FormulaType$ , by Substitution Lemma we have ($HE_2$) $  \conclLNC: \FormulaType \typeOf \app e[v/\selfLNC][v_1/\premisesLNC] : \RuleType$.

Given ($HE_2$), and given $\emptyset \typeOf v_2:  \FormulaType$ , by Substitution Lemma we have  $\emptyset \typeOf \app e[v/\selfLNC][v_1/\premisesLNC][v_2/\conclLNC] : \RuleType$.
\end{case}
\begin{case}[\ref{r-selector-nil}]
$V ; \LangDef \app ; \app \select{\emptyListLNC}{p}{e} \step\atomic  V ; \LangDef \app ; \app \emptyListLNC$. 
We need to prove 
$V \cap \varsSEM{\LangDef} = \emptyset $, which we have because by (H1). 
We have to prove that (\#) $\Gamma \typeOf  \emptyListLNC:  \List\app T$ when (*) $\Gamma \typeOf  \select{\emptyListLNC}{p}{e} : \List\app T$. Thanks to \ref{t-emptyList} this holds. 
\end{case}
\begin{case}[\ref{r-selector-cons-ok}]
$V ; \LangDef \app ; \app \select{(\consLNC{v_1}{v_2})}{p}{e} \step\atomic  V ; \LangDef \app ; \app \filterNothingLNC{(\consLNC{e\theta\theta'}{(\select{v_2}{p}{e}}))}$. 
We need to prove 
$V \cap \varsSEM{\LangDef} = \emptyset $, which we have because by (H1). 
We have to prove that (\#) $\emptyset \typeOf  \filterNothingLNC{(\consLNC{e\theta\theta'}{(\select{v_2}{p}{e}}))} : \List\app T'$ when (*) $\Gamma \typeOf  \select{(\consLNC{v_1}{v_2})}{p}{e} : \List\app T'$
By \ref{t-selector} we have that $\Gamma \typeOf (\consLNC{v_1}{v_2}) :  \List\app T$, and therefore by \ref{t-cons}, we have that $\Gamma \typeOf v_1:T$. We do a case analysis on whether $T=\RuleType$ or not, to prove in both cases that $\Gamma \typeOf e\theta\theta' : \MaybeType{T'}$. 
\begin{itemize}
\item $T=\RuleType$: By Canonical Form, then we have that $v_1 = \inference{v_1'}{v_2'}$. Then $\theta' =  [v_1/\selfLNC,v_1'/ \premisesLNC  , v_2'/\conclLNC  ]$. From (*) we infer that $ \Gamma', \selfLNC : \RuleType, \premisesLNC : \List \app \FormulaType,  \conclLNC: \FormulaType \typeOf \app e_2 : \MaybeType{T'}$, where $\Gamma'$ comes from the pattern-matching. 
By applying the same reasoning as in \ref{r-rule-comp}, we can apply the Substitution lemma three times to have $\Gamma' \typeOf e\theta' : \MaybeType{T'}$. By Lemma \ref{thm:pattern-matching} (pattern-matching correctness) we have that for all $(x:T'')\in\Gamma'$ there is $[x/v'']\in\theta$ such that $\emptyset\typeOf v'' : T''$. Then, for all such $(x:T'')\in\Gamma'$ we can use the Substitution Lemma to substitute its $[x/v'']$, and end up with $\emptyset \typeOf e\theta\theta' : \MaybeType{T'}$.
\item $T\not=\RuleType$: Then $\theta' =  [v_1/\selfLNC]$ and by Substitution lemma we have $\Gamma' \typeOf e\theta' : \MaybeType{T'}$. By pattern-matching correctness, the same reasoning as in the previous case leads us to $\emptyset \typeOf e\theta\theta' : \MaybeType{T'}$.
\end{itemize}
As now we know that (*) $\Gamma \typeOf e\theta\theta' : \MaybeType{T'}$ in all cases. We have that $\Gamma \typeOf (\consLNC{e\theta\theta'}{(\select{v_2}{p}{e}})) : \List\app \MaybeType{T'}$ because of (*) and because $\Gamma \typeOf (\select{v_2}{p}{e}) :  \List\app \MaybeType{T'}$. Therefore, by \ref{t-filterNothing}, we have (\#). 
\end{case}
\begin{case}[\ref{r-uniquefy-ok}]
$V ; \LangDef \app ; \app \uniquefy{r}{v}{str}{x}{e} \step\atomic  V ; \LangDef \app ; \app e\subsForRule{r'}[m/x]$. 
We need to prove 
$V \cap \varsSEM{\LangDef} = \emptyset $, which we have because by (H1). 
We have to prove that (\#) $\Gamma \typeOf e\subsForRule{r'}[m/x] : T$ when (*) $\Gamma \typeOf \uniquefy{r}{v}{str}{x}{e} : T$.
By \ref{r-uniquefy-ok} we have $(r', m) = \uniquefySEM\subRule{r}{v}{str}{\emptyMapLNC}$, and by Lemma \ref{uniquefyCorrectness} we have that $\emptyset \typeOf r' : \RuleType$, and $\emptyset \typeOf m : \MapType{\VarType}{(\List\app\VarType)}$. 
By \ref{t-uniquefy} we have that $\Gamma, \GammaForRule{r}, x: \MapType{\VarType}{(\List\app \VarType)} \typeOf \app e: T$. 
By Substitution Lemma, we then have $\Gamma, \GammaForRule{r} \typeOf e[m/x] : T$. By application of Substitution Lemma three times, in the same way as done in \ref{r-rule-comp}, we have (\#). 

\end{case}
\end{proof}

\begin{theorem}[Subject Reduction ($\step$)]
For all $V$, $V'$, $\LangDef$, $\LangDef'$, $e$, $e'$, if $\emptyset \typeOf V ; \LangDef ; e$ and $V ; \LangDef ; e \step V' ; \LangDef' ; e'$ then $\emptyset \typeOf V' ; \LangDef' ; e'$.
\end{theorem}
\begin{proof}
Let us assume the proviso of the theorem and have (H1) $\emptyset \typeOf V ; \LangDef ; e$ and $V ; \LangDef ; e \step V' ; \LangDef' ; e'$.
The proof is by case analysis on the derivation of $V ; \LangDef ; e \step V' ; \LangDef' ; e'$

\begin{case}[\textsc{ctx-succ}]
$V ; \LangDef ; E[e] \step V' ; \LangDef' ; E[e']$ when (H2) $V ; \LangDef ; e \step\atomic V' ; \LangDef' ; e'$. 
From (H1) we know that (H6) $\emptyset \typeOf E[e] : T$, for some $T$. Then we can apply Lemma \ref{thm:compositionality} to have that (H3) $\emptyset \typeOf e : T'$, for some $T'$.
With (H2) and (H3) we can apply Subject Reduction for $\step\atomic$ and obtain that (H4) $\emptyset \typeOf e : T'$ and (H5) $V'\cap \varsSEM{\LangDef'} = \emptyset$. 
By Lemma \ref{thm:compositionality}, since we have (H6) and (H4) we can derive  $\emptyset \typeOf E[e'] : T$, and since we have (H5), we can derive $\emptyset \typeOf V ; \LangDef ; E[e']$.
\end{case}
\begin{case}[\textsc{ctx-lang-err}]
$V ; \LangDef \app ; \app E[e]\step V ; \LangDef \app ; \app \errorLNC$. We need to prove 
$\varsSEM{\LangDef} \subseteq V$, which we already have by (H1).  We need to prove $\typeOf  \LangDef$, which we have by (H2), and $\Gamma \typeOf  \errorLNC: T$, which we can prove with \textsc{(t-error)}. 
\end{case}
\begin{case}[\textsc{ctx-err}]
Same line as \textsc{ctx-lang-err}. 
\end{case}
\end{proof}

\section{Progress Theorem}

\begin{theorem}[Progress Theorem Expressions]
For all , if $\emptyset \typeOf e : T$ then either 
\begin{itemize}
\item $e = v$, or 
\item $e = \errorLNC$, or 
\item for all $V, \LangDef$, $V ; \LangDef ; e \step V' ; \LangDef' ; e'$, for some $V',\LangDef',e'$.
\end{itemize}
\end{theorem}
\begin{proof}
We prove the theorem by induction on the derivation of $\emptyset \typeOf e: T$. 
Let us assume the proviso of the theorem, that is (H1) $\emptyset \typeOf e: T$. 
\begin{case}[\ref{t-opname}] 
Since (H1) then we have $(opname \app e)$ with $\emptyset \typeOf e: \List\app \TermType$.
By IH, we have that 
\begin{itemize}
\item $e = v$. Then we have $(opname \app v)$ which is a value. 
\item $e = \errorLNC$. Then we have $(opname \app \errorLNC)$ and by \ref{ctx-err} we have $V ; \LangDef \app ; \app (opname \app \errorLNC) \step V ; \LangDef \app ; \app \errorLNC$, for all $V, \LangDef$ because of the evaluation context $(opname \app E)$.
\item for all $V, \LangDef$, $V ; \LangDef ; e \step V' ; \LangDef' ; e'$, for some $V',\LangDef',e'$. Then $(opname \app e)$ takes a step by \ref{ctx-succ} or \ref{ctx-lang-err}. 
\end{itemize}
\end{case}
\begin{case}[\ref{t-rule-comp}] 
Since (H1) then we have $(\ruleComposition{e_1}{e_2})$ with $\emptyset \typeOf e_1: \List\app \RuleType$.
By IH, we have that 
\begin{itemize}
\item $e_1 = v$. Then we have $\ruleComposition{v}{e_2}$ and \ref{r-rule-comp} applies. 
\item $e_1 = \errorLNC$. Then we have $\ruleComposition{\errorLNC}{e_2}$ and by \ref{ctx-err} we have $V ; \LangDef \app ; \app \ruleComposition{\errorLNC}{e_2} \step V ; \LangDef \app ; \app \errorLNC$, for all $V, \LangDef$ because of the evaluation context $\ruleComposition{E}{e}$.
\item for all $V, \LangDef$, $V ; \LangDef ; e_1 \step V' ; \LangDef' ; e_1'$, for some $V',\LangDef',e'$. Then $\ruleComposition{e_1}{e_2}$ takes a step by \ref{ctx-succ} or \ref{ctx-lang-err}. 
\end{itemize}
\end{case}
\begin{case}[\ref{t-seq}] 
Since (H1) then we have $e_1; e_2$ with $\emptyset \typeOf e_1: \LanguageType$.
By IH, we have that 
\begin{itemize}
\item $e_1 = v$. By Canonical form $e_1 = \skipLNC$. Then we have $\skipLNC; e_2$ which by \ref{r-seq-collapse} takes a step. 
\item $e_1 = \errorLNC$. Then we have $\errorLNC; e_2$ and by \ref{ctx-err} we take a step to an error. 
\item for all $V, \LangDef$, $V ; \LangDef ; e_1 \step V' ; \LangDef' ; e_1'$, for some $V',\LangDef',e'$. Then by \ref{ctx-succ} or \ref{ctx-lang-err}, we take a step. 
\end{itemize}
\end{case}
\begin{case}[\ref{t-selector}] 
Since (H1) then we have that $\emptyset \typeOf e_1: \List \app T$.
By IH, we have that 
\begin{itemize}
\item $e_1 = v$. By Canonical form $e_1$ can have two forms:
\begin{itemize} 
\item $e_1 =  \nil$ Then we apply \ref{r-selector-nil} takes a step. 
\item $e_1 = \cons{v_1}{v_2}$. Then we have two cases: ether $\matchLNC{v_1,p}$ succeeds, then we apply \ref{r-selector-cons-ok} and take a step, or $\matchLNC{v_1,p}$ fails, then we apply \ref{r-selector-cons-fail} and take a step. 
\end{itemize}
\item $e_1 = \errorLNC$. Then by \ref{ctx-err} we take a step to an error. 
\item for all $V, \LangDef$, $V ; \LangDef ; e_1 \step V' ; \LangDef' ; e_1'$, for some $V',\LangDef',e'$. Then by \ref{ctx-succ} or \ref{ctx-lang-err}, we take a step. 
\end{itemize}
\end{case}
\begin{case}[\ref{t-uniquefy}] 
Since (H1) then we have that $\emptyset \typeOf e_1: \RuleType$, $\emptyset \typeOf e_2: \MapType{\NameType}{(\List\app\StringType)}$.
By IH on $e_1$, we have that 
\begin{itemize}
\item $e_1 = v_1$. By Canonical form $e_1 = r$. By IH on $e_2$ we have the three cases: 
\begin{itemize} 
\item $e_2 =  v_2$. Then we there are two cases: Either $(r', m) = \uniquefySEM\subRule{r}{v}{str}{\emptyMapLNC}$, and we apply \ref{r-uniquefy-ok} to take a step, or $(r', m) \not= \uniquefySEM\subRule{r}{v}{str}{\emptyMapLNC}$, and we apply \ref{r-uniquefy-fail} to take a step.
\item $e_2 = \errorLNC$. Then by \ref{ctx-err} we take a step to an error. 
\item for all $V, \LangDef$, $V ; \LangDef ; e_2 \step V' ; \LangDef' ; e_2'$, for some $V',\LangDef',e_2'$. Then by \ref{ctx-succ} or \ref{ctx-lang-err}, we take a step. 
\end{itemize}
\item $e_1 = \errorLNC$. Then by \ref{ctx-err} we take a step to an error. 
\item for all $V, \LangDef$, $V ; \LangDef ; e_1 \step V' ; \LangDef' ; e_1'$, for some $V',\LangDef',e_1'$. Then by \ref{ctx-succ} or \ref{ctx-lang-err}, we take a step. 
\end{itemize}
\end{case}
\begin{case}[\ref{t-tick}] 
Since (H1) then we have that $\emptyset \typeOf e_1: T $, $\emptyset \typeOf e_2:\List\app\VarType$.
By IH on $e_1$, we have that 
\begin{itemize}
\item $e_1 = v_1$. By IH on $e_2$:
    \begin{itemize} 
    \item $e_2 = v_2$. Then we have two cases depending on $T$: 
        \begin{itemize} 
        \item $T =  \TermType$. By Canonical forms, we have that $e_1$ can be of the following forms:
            \begin{itemize} 
            \item $(opname \app v_1')$. Then we apply \ref{r-tick-opname} and take a step. 
            \item $\langVar{X}$. Then we apply \ref{r-tick-var} and take a step. 
            \item $(\langVar{z}v_1')$. Then we apply \ref{r-tick-abs} and take a step. 
            \item $v_1'[v_1''/\langVar{z}]$. Then we apply \ref{r-tick-sub} and take a step. 
            \end{itemize}
        \item $T = \List\app\TermType$. By Canonical form $e_1$ can have two forms:
            \begin{itemize} 
            \item $e_1 =  \nil$. Then we apply \ref{r-tick-nil} takes a step. 
            \item $e_1 = \cons{v_1}{v_2}$. Then we apply \ref{r-tick-cons} takes a step. 
            \end{itemize}
        \end{itemize}
    \item $e_2 = \errorLNC$. Then by \ref{ctx-err} we take a step to an error. 
    \item for all $V, \LangDef$, $V ; \LangDef ; e_2 \step V' ; \LangDef' ; e_2'$, for some $V',\LangDef',e_2'$. Then by \ref{ctx-succ} or \ref{ctx-lang-err}, we take a step. 
    \end{itemize}
    \item $e_1 = \errorLNC$. Then  by \ref{ctx-err} we take a step to an error. 
    \item for all $V, \LangDef$, $V ; \LangDef ; e_1 \step V' ; \LangDef' ; e_1'$, for some $V',\LangDef',e_1'$. Then by \ref{ctx-succ} or \ref{ctx-lang-err}, we take a step. 
\end{itemize}
\end{case}
\qed
\end{proof}

\begin{theorem}[Progress Theorem for Configurations]
For all , if $\emptyset \typeOf V ; \LangDef ; e$ then either 
\begin{itemize}
\item $e = \skipLNC$, or 
\item $e = \errorLNC$, or 
\item $V ; \LangDef ; e \step V' ; \LangDef' ; e'$, for some $e'$.
\end{itemize}
\end{theorem}
\begin{proof}
Let us assume the proviso: $\emptyset \typeOf V ; \LangDef ; e$. Then we have $\emptyset \typeOf  e : \LanguageType$.
By Progress Theorem for Expressions, we have that 
\begin{itemize}
\item $e = v$. By Canonical forms, $e = \skipLNC$.
\item $e = \errorLNC$, which satisfies the theorem. 
\item $V ; \LangDef ; e \step V' ; \LangDef' ; e'$, for some $e'$, which satisfies the theorem. 
\end{itemize}
\qed
\end{proof}

\end{document}
\newpage

$\langCom$ makes use of a domain-specific language to define languages in operational semantics. 
We do not provide a formal syntax for language definitions, as our domain-specific language is rather intuitive. 
Fig. \ref{fig:LNCrunexample} shows the language definition in $\langCom$ for $\runningExample$. 
\begin{figure*}[tbp]
{
\small
\lstinputlisting[numbers=left,language=lprolog]{lists.mod}
}
\caption{Example of language definition in $\langCom$. Binding is limited to unary lexical scoping \cite{Cheney:2005}, which is sufficient in the scope of our course. We express binding with syntax such as {\small\texttt{(x)E}}, that is, {\small\texttt{x}} is bound in {\small\texttt{E}}. This is similar to the directive {\small\texttt{(+ bind x in e +)}} in the Ott tool \cite{Ott}. \key{E[V/x]} represents the capture-avoiding substitution.} 
\label{fig:LNCrunexample}
\end{figure*}
%
This definition describes a grammar (lines 1-4), a type system (lines 6-12), and a dynamic semantics (lines 14-16), and does so with a textual representation of the syntax that is typically employed in operational semantics\footnote{This is not a novelty. The Ott language, for example, achieved the same effect over ten years ago \cite{Ott}.}. A few differences from pen\&paper formulations is that grammar variables are always capitalized and the horizontal line in an inference rule \[{\inference{{premises}}{conclusion}}\] is replaced with an inverse implication \texttt{<==} that can be read \emph{``provided that}". 


Language designers can define elimination forms with \emph{role directives} such as those at line 18 and 19. 
A directive \key{\# \app eliminator\app app\app arrow} means that the application operator is declared as elimination form for the function type.  

Language designers can also define the variance for types with \emph{variant directive} such as that at line 20. 
The directive \key{\#\app contravariant\app arrow\app 1} states that the first argument of the function type is contravariant. 
$\langCom$ also has the keywords \key{invariant} and \key{covariant}. When the variance of an argument is not declared it is taken to be covariant by default. 

Similar directives can be used to declare the modes for relations. 
$\langCom$
Modes are declared with \emph{mode directives} such as \key{mode:\app \textit{predicatename}\app inp\app inp\app \ldots\app out} (an output may or may not present). 
These directives do not appear in Fig. \ref{fig:LNCrunexample} because it makes use of relations that $\langCom$ considers builtin. 




\subsection{Expressions (Transformations)}

\paragraph{Main Components: Terms, Formulae, and Rules}
Terms have a uniform shape $(op\app e \ldots e)$ and they accommodate types, expressions, and other syntactic categories tha language designers may have. Examples of terms are $(\to\app T_1\app T_2)$ and $(\key{app}\app e_1\app e_2)$. 
The arguments of terms are general $\langCom$ expressions. We do so because we will often make use of other constructs of lang-n-change such as the if-then-else to build the arguments in a term. 
The syntax for terms also contains the capture-avoiding substitution $t[t/x]$ and binding $(x).t$ as primitive. 

Formulae are of the form $(predname\app e \ldots e)$, in which we use $\langCom$ expressions as arguments for the same reason we do for terms. 

Rules follow a standard shape with premises and conclusion. We allow to use general $\langCom$ expressions in both places. 

\paragraph{Lists and Maps}

$\langCom$ has lists and maps as primitive data types. Lists can be build with the usual operator $[\ldots]$ and can be accessed with $\head$ and $\tail$. Maps can be created with \key{createmap}. For example $\key{createMap}(T, [T_1, T_2]) \Rightarrow mymap: body$ creates the map with the key $T$ and the list the list [T1, T2] as its value and binds it to $mymap$ inside $body$. To operation $myMap(T)$ returns $[T_1, T_2]$. For the sake of simplicity, we allow only one key to be created by explicitly the programmer. Other operations may create maps with more keys, as point out below.

\paragraph{Basic Operations: Conditionals and Sequencing}

$\langCom$ has conditional operations in the form of an if-then-else operation. 
The guard of the if-then-else can be one of the boolean expressions $b$ in the grammar of Fig 2.
$\isContra(x)$ checks whether the type appears in contravariant position in the rule
Similarly, $\isCovar(x)$ and $\isInvar(x)$ check whether it appears in covariant or invariant position. 
$\overlap(x,x)$ takes two (bound) terms and check whether they have variables in common. 
$\inductive(x)$ checks whether the type is inductive, in the sense that the type has arguments that are types as well. 
For example, $\to$ and $\List$ are inductive but $\Int$ and $\Bool$ are not. 
$\isInput(x)$ check that a variable appears in input position in the current rule. $\eliminatorOf{x}{x}$ checks that an operator is declared as elimination form for a certain type. For example $\eliminatorOf{app}{\to}$ would hold and $\eliminatorOf{app}{\Bool}$ would be false. 
$\langCom$ also include two predicates on lists to check whether a list is empty and whether an element belongs to a list. 


The sequencing operator $e_1;e_2$ executes the two expression one after the other. The first expression $e_1$ transforms a language definition and returns another language definition. The second expression $e_2$ will then apply its transformations on this latter language definition. 

\paragraph{Selectors}
Selectors are one of the most important operations of $\langCom$. 
Selectors iterate a body for all instances of items that match a pattern. These items can be terms, formulae, or rules, depending on what the selector is applied for. 
Selectors are applied to a field $op$ which can be a string or a variable. 
Some strings have a specific purpose as we shall describe below. 

\key{rule}$[p]:e$ is used to select rules of the language whose conclusion matches the pattern $p$. In this case $p$ is a formula. As an example, \key{rule}$[Gamma |- e : T]$ selects all the typing rules. It can be used with more complex patterns, for example \key{rule}$[(\key{app }\app e_1\app e_2) --> e_3]$ selects all the reduction rules whose source is built with a top level operator \key{app}. Selectors perform ordinary pattern-matching and bindings are then available in the body $e$. The latter example would select the \textsc{(beta)} rule with the binding $e_1 = lambda. x:T.e$ and $e_2 = v$.

\key{premise}$[p]:e$ is used in the context of a rule and is used to select the premises of the rule that match the pattern $p$. To make an example, \key{premise}$[T_1 <: T_2]:e$ selects all the premises that use the subtyping relation. 

Selectors can be used to select items in grammars by mentioning the desired syntactic category. If we were to retrieve all the expressions, we would write expression$[op \app e_1 \app \ldots \app e_n]$. This has the effect to inspect the grammar for the  syntactic category Expression and try to match each grammar items with that pattern. 
This particular pattern uses the notation $\app e_1 \app \ldots \app e_n$ with dots, which is special notation in $\langCom$ to express that there can be a finite but not known number of arguments for $op$. For example, the pattern accommodates the if-then-else operator, which has 3 arguments, and also the application, which has 2 arguments. As a convention, body $e$ can refer collectively to $e_1 \app \ldots \app e_n$ as a list called $es$, that is, an $s$ is suffixed to the variable used to pattern-match a variable number of arguments. 
Grammar items do not have an index, such as $(\key{if} \app e\app e\app e)$ and $(\key{app}\app e\app e)$ but in most cases we makes use of these items from the grammar in the context of inference rules in which using the same variable actually means that they must be equal. Most of the time that is not the intended purpose. Therefore, $\langCom$ automatically extracts items from grammars giving them indexes. For example, we would extract $(\key{if} \app e_1\app e_2\app e_3)$ and $(\key{app}\app e_1\app e_2)$ from grammars.

$\langCom$ offers specific fields to select operators based on their roles. \key{operations}$[p]:e$ selects only operations, and \key{eliminators}$[p]:e$ selects only elimination forms.

We use selectors also for iterating over lists and maps. If $es$ is a list of terms we can use $es[e]:body$ with the meaning that each element of the list is bound to the variable $e$ in $body$. Similarly, if $mymap$ is a map, $mymap[k]:body$ iterates over all the keys of the map, binding each to $k$ in $body$.

In $\langCom$ we use the pattern $\star$ to simply select all the items without assigning a specific binding. 

Another way to use selectors is with the option \key{keep}. This has the effect to return the items that fail the pattern, in addition to apply the iteration to those that do match the pattern. As a convenience to programmer, the keyword \key{self} denotes the item being selected in the current iteration. For example \key{rule}$[\star]:$ \key{self} simply returns all the rules of the language.

\paragraph{Syntax Transformation}

$\langCom$ provides operations to the manipulating of grammars. Currently, we restricted $\langCom$ to two simple operations.  
The first syntax transformation can be used in the following way: Statements s ::= $(\key{while} \app b \app s)$. This has the effect to add this syntactic category to the language. If the category Statement existed then it is replaced. 
The second syntax transformation can be used as in Expression  e :: = $\ldots \mid (\key{cast} \app e\app T\app T)$. This has the effect to add the cast operator to the already existing grammar for expressions. If this syntactic category did not exist previously then it is created and the cast operator would be the only grammar item. 
In order to compute new grammar items we often makes use of other constructors of $\langCom$ such as if-then-else. 

\paragraph{Rule and Formula Operations}

$\uniquefy{x}{e}{}{}{}$ works in the context of a rule and assigns a new variable to each output that appears more than once with the same variable. To make an example, \key{uniquefyOuputs} turns \textsc{t-app} into 
\[
\inference
	{
	\Gamma \typeOf \app e_1 : T_3\to T_2  \qquad \Gamma \typeOf \app e_2 : T_4
	} 
	{ \Gamma \typeOf \app e_1\app e_2 : T_2}
\]
That is, $T_1$ has been split into new variables $T_3$ and $T_4$. Further transformations may need to know which variables have been split. To allow this, the operation also binds a map to $x$ in the body $e$. This map contains associations that describe the replacements taken place. For example, after this transformation on \textsc{t-app}, the map would contain $T$ as a key, and the list $[T_3,T_4]$ as its value.

\key{fold} takes a binary predicate name and a list of terms and creates a formula that relates the first element with the second, the second with the third, and so on. For example, $\key{fold}\app <:\app [T_1, T_2, T_3,T_4]$ results is the list of formulae $T_1 <: T_2$, $T_2 <: T_3$, and $T_3 <: T_4$.

\paragraph{Term Operations}

The keyword $\conclLNC$ can be used in the context of a rule and returns the conclusion of such rule. 
$\getOverlap{e_1}{e_2}{x}{e_3}$ computes the list of variables that $e_1$ and $e_2$ have in common, and binds this list to $x$ inside the body of $e_3$. 
$\newVar{x}{e}$ creates a new variable in the scope of $e$. $\langCom$ makes sure that this variables is being in use in the rule. 
The tick operation $\tickedLNC{e}$ is applied to a term and gives a prime to its variables. For example $(T)'$ returns the variable $T'$. $\langCom$ makes sure that $T'$ is not already used in the rule. If that is not the case another name is chosen. The tick operation also has a restricted version $\tickedMMLNC{e_1}{e_2}$, in which $e_2$ contains the list of variables in $e_2$ that are allowed to be ticked. To make an example, $\tickedMMLNC{(op\app e_1\app e_2\app e_3)}{[e2,e3]} = (op\app e_1\app e_2'\app e_3')$.

\section{Implementation and Examples}\label{examples}

\paragraph{Implementation}
We have implemented $\langCom$ in OCaml. 
Language definitions and language transformations are in different files. 
Language definitions are parsed into an OCaml data type representations. Transformations are compiled into OCaml code that manipulate that data type according to the instruction of the specified algorithm. 
Language definitions are then pretty-printed in the same format as language definitions in input. 

In this section, we employ $\langCom$ to describe algorithms that are of interest in programming languages. 

\paragraph{From small-step to big-step}

Our first example is a simple, though incomplete (see Section \ref{future}), algorithm to automatically transform a language definition from small step to big step semantics. 
The algorithm is below. 
{
\small
\lstinputlisting[numbers=left,language=lprolog]{big.tr}
}

Line 1 makes use of the selector for operations. We recall that the variables $e1$, $\ldots$, and  $en$ can be collectively be referred to as the list $es$. 
For each operation, we select the rules of the languages that obey the specific pattern in line 2.  This pattern selects only rules whole conclusion proves a $\step$-formula, and whose target has the top level operator $op$ of the current iteration. 
This has the effect to select the reduction rules that model the operational semantics of $op$. 
For each such rule, we create a new rule whose conclusion is in line 9. In the conclusion, the top level operator of the source is $op$ applied to the pairwise distinct variables $es$. The target can either be a new variable $X$ or the target that comes directly from the reduction rule selected. We will discuss this choice shortly. 
The premises of this rule are described by two parts. The first part is in lines 4-6, and the second is line 7. In lines 4-6 we strive to make each $e_i$ evaluate to $exp_i$, which is the expression that the small-step operational semantics rule prescribes to start with. To explain the reason for the tests in lines 4-5 we can take a look at the rule \textsc{(r-if-true)} after this transformation if we simply evaluated each $e_i$ to each $exp_i$. 
\[
\inference
	{
	exp_1 \step \true \\
	exp_2 \step e_2 \\
	exp_3 \step e_3 
	} 
	{\If \app exp_1 \app exp_2 \app exp_3 \step e_2}
\]
Here evaluating $e_3$ is useless because it does not contribute to the target. Therefore, our algorithm strives to be more precise and tests whether there are overlapping variables between that expression and the target in line 4. 
There is a caveaut to this principle, however. The principal argument in \textsc{(r-if-true)} is $\true$ and that too does not appear in the target! Nevertheless, since it is a complex expression we are supposed to get it in order to fire \textsc{(r-if-true)}. That is,  we need to evaluate $e_1$ to $\true$. Hence, the test at line 5. 
The premise is in synch with the computation of the target of line 8. For \textsc{(r-if-true)} we do not need a new variable because the target $e_2$ appears in $es$ and already has an evaluation step. For a rule such as \text{(beta)}, however, the target is complex and needs to be evaluated to yield the ultimate result. In that case we would make use of a new variable $X$, as follows. 
\[
\inference
	{
	exp_1 \step \lambda x:T.e \\
	exp_2 \step v \\
	e[v/x] \step X 
	} 
	{\If \app e_1 \app e_2 \app e_3 \step X}
\]

This transformation is composed with another at line 11. This latter transformation simply provides a reduction rule for all values. In big step semantics values evaluate to themselves.
Ultimately, if we apply lang-n-change to the language definition of Fig \ref{fig:LNCrunexample} we obtain the following (for brevity, only the operational semantics is displayed). 

{\small
\lstinputlisting[numbers=left,language=lprolog]{big.mod}
}

\paragraph{Automatic Subtyping}

Our second example is an algorithm for adding subtyping to languages. 
The algorithm is below. 

{
\small
\lstinputlisting[numbers=left,language=lprolog]{subtyping.tr}
}

The algorithm is a sequencing (;) of two selectors, the one that begins at Line 1 and the one that begins at line 28. Line 1 makes use of a selector for selecting rules. $\langCom$ makes a shorthand available for which the pattern $\typeOf$ simply means the patter $Gamma \typeOf \app e: T$, with free variables $Gamma$, $e$ and $T$. This has the effect to select all the typing rules. The body of the selector has only one instruction that is the uniquefy operation, which gives new variables to all outputs that are not distinct. After this operation, $myMap$ is in the scope of lines 3-24. We have a sequencing of 3 transformation expressions. The first expression is at lines 4-7. This transformation applies from the rule returned after the uniquefy operation. It keeps the same conclusion, and also keeps the same premises with line 4 (that is a $\langCom$ shorthand for \key{premise}$[*]: \app$\key{self}). Line 5 adds further premises: for each key in the map $myMap$ we relate the variables in the list in the value of that key with subtyping. 
To make an example, after the uniquefy operation and the transformation of lines 4-7, \textsc{(t-app)} would turn into the following rule. ($myMap$ has the only key $T_1$ mapped to the list $[T_3, T_4]$, which are the new variables that have replaced $T_1$).
\[
\inference
	{
	\Gamma \typeOf \app e_1 : T_3\to T_2 \\
	\Gamma \typeOf \app e_2 : T_4	\\
	T_3 <: T_4
	} 
	{ \Gamma \typeOf \app e_1\app e_2 : T_2}
\]
Lines 11-17 act on this rule. These lines specify a transformation in order to adjust the direction of the subtyping when needed. 
Line 11 simply keeps the typing rules as they are. Line 12, instead, selects the rules that speak about subtyping. For each of them, the two nested if-then-else operations perform the adjustment depending on the variance of types. For the rule above, we would have the swap $T_4 <: T_3$ from $T_3 <: T_4$ because $T_3$ is in contravariant position. 

Lines 21-24 perform a rule transformation that is responsible for handling the inputs that may have been left without an output after the uniquefy operation. Let us consider the typing rule for \key{if} outputs have been replaced. 
\[
\inference
	{
	\Gamma \typeOf \app e_1 :\Bool \\
	\Gamma \typeOf \app e_2 : T_1	\\
	\Gamma \typeOf \app e_3: T_2	
	} 
	{ \Gamma \typeOf \If e_1\app e_2\app e_3 : \HI{$T$}}
\]
In the conclusion, the input used to receive its output from the premises but now that $T$ has been replaced we have that that  variable is incorrectly free. The standard treatment (see TAPL \cite{}) is that $T$ receives the join of the two new variables $T_1$ and $T_2$, that is $T_1 \cap T_2 = T$. Line 22 does precisely so. For each key $T$ of then map if $T$ is also an input then we place in a premise that compute the join. The join predicate is such that the last argument is the output. At line 22, the join predicate does not take two arguments as it may look like. Mentioning the list $Map(T)$ expands to fill the arguments of join, which then are followed with the additional $T$ at the end. 

The second part of the algorithm is at line 28. This part is responsible for defining the subtyping relation. That is, is supposed to create the inference rules 
\begin{gather*}
\Int <: \Int \qquad \Bool <: \Bool\\
\inference
	{
	T_1' <: T_1 \qquad T_2 <: T_2'
	} 
	{ T_1 \to T_2  <: T_1' \to T_2' }
\end{gather*}
To do so, line 28 selects all the types from the syntactic grammar Types. Lines 29-34 creates a rule. The conclusion of this rule is a subtyping relation where we have the type on one side, applied to distinct variables, and essentially the same type on the other side but in which those variables are ticked. 
The premises of this rule are described in lines 29-33. For each variable $T$ in $Ts$, we put a subtyping premise that relates a variable and its ticked version in a pairwise fashion. The direction of subtyping depends on the variance of the type. 

When we apply the algorithm to the language definition of Fig \ref{fig:LNCrunexample}, $\langCom$ produces the following  language definition. 
(For brevity, only typing rules and subtyping rules are displayed). 

{
\small
\lstinputlisting[numbers=left,language=lprolog]{subtyping.mod}
}

Notice that $T_1$ and $T_2$ are not subject to variance because they are not in the context of any type constructor. 
Therefore, they are treated as invariant, yielding the double direction subtyping at lines 18 and 19 . 
These premises are redundant, as the existence of a join between $T_1$ and $T_2$ already implies $T_1 <: T_2$ and $T_2 <: T_1$ \cite{}. The algorithm could be modified to remove these premises. We show this patch in Section \ref{} when speaking of the limitations of the language. 

Notice also that the transformation in lines 28-34 takes care also of subtyping axiom cases for  $\Int$ and $\Bool$ because constant types simply are instances matched in line 28 in which we have $T_1 ... T_n$ with $n = 0$ arguments. 

\paragraph{Automatic Gradual Typing}

Gradual typing is an approach to programming languages in which static and dynamic typing are mixed in the same language. 
One of the most popular method to give a gradually typed language is to start with a statically typed version of a language. Language designers need to modify their language at hand in order to equip it with a treatment of dynamic typing. 
The first step is to add a specific type $\dyn$ to the language, with the meaning that expressions of type $\dyn$ are dynamically typed. The type system must be modified to be more liberal at the encounter of $\dyn$ because type checking is delegated to the run-time for them. 
Cimini and Siek offered an algorithm to automatically transform type system into their gradually typed version \cite{}. We have used $\langCom$ in order to implement that language transformation. 
As it turns out, most of the algorithm for introducing subtyping automatically is the same for gradualizing a type system. This is because subtyping too achieves an effect of forgoing comparing types by equality in order to apply a more liberal relation. We can see this point by comparing the following gradual typing rule for application. This typing rule is partially correct, as explained below. 
\[
\inference
	{
	\HI{$\Gamma \typeOf \app e_1 : T_3\to T_2$} \\
	\Gamma \typeOf \app e_2 : T_4 \qquad T3 \sim T4	
	} 
	{ \Gamma \typeOf \app e_1\app e_2 : T_2}
\]
As we can see this, the variable $T_1$ of \textsc{(t-app)} has been split into $T_3$ and $T_4$ and these two new variables have been related with a relation $\sim$ that is called type consistency. Type consistency has different axioms then subtyping in order to make the relation reflexive, symmetric but not transitive \cite{}, but has otherwise the same definition of subtyping (subtyping rules) modulo the use of $\sim$ where $<:$ occurs. We therefore omit showing the algorithm that generate $\sim$. 
An addition to that algorithm is, however, needed for the treatment of the premise highlighted above. 
Indeed, if we kept the typing rule with that premise we would impose our type system to type check succesfully only when the function argument really is of function type. In gradual typing, however, we must be prepared to type check successfully also when the fucntion is of dynamic type $\dyn$. Since $\dyn$ would not pattern-match with the output $T_3\to T_2$ required by the premise above, a legitimate simple program such as $\lambda x:\dyn. (x \app 4)$ would be rejected by the type system. The correct gradual typing rule for application, together with a needed auxiliary relation, is then the following. 
\begin{gather*}
\inference
	{
	\Gamma \typeOf \app e_1 : X \qquad X \matches T_1\to T_2 \\
	\Gamma \typeOf \app e_2 : T_3 \qquad T1 \sim T3 	
	} 
	{ \Gamma \typeOf \app e_1\app e_2 : T_2}\\[1ex]
\Int \matches \Int \qquad 
\dyn \matches \Int \\[1ex]
\Bool \matches \Bool \qquad
\dyn \matches \Bool \\[1ex]
T_1 \to T_2 \matches T_1 \to T_2 \qquad 
\dyn \matches \dyn \to \dyn  \\[1ex]
\List\app T\matches \List \app T \qquad 
\dyn \matches \List\app \dyn  
\end{gather*}
Here, we let the output of type checking $e_1$ be accommodated into a new variable $X$. Subsequently, we use a gradual matching operation to check that $X$ is either an ordinary function type, or a dynamic type \cite{}. Moreover, if the latter case occurs, the rest of the typing rule should continue pretending that the function at hand is, essentially, a function from dynamic type to dynamic type ($\dyn \to \dyn$). The treatment just described is not specific to the function type and is applied to most simple types such pairs, sums, lists, and so on. 
$\langCom$ can model this additional transformation by extending the algorithm for subtyping at line 24 with 

{
\small
\lstinputlisting[numbers=left,language=lprolog]{gradualStat.tr}
}

Line 1 selects all typing rules. For each of such rules, it creates a new rule in which all premises that pattern-match the pattern $Gamma |- e : (c \app T1\app ...\app T3)$ are subject to the following transformation at lines 3-5 (notice that the pattern searches for an output that is a complex type). The premises that fail this pattern are simply kept as they are. 
For the selected premises, we create a new variable $X$ and we apply the treatment described previously. 
Afterwards, we generate the definition of gradual matching at lines 9-12. 
We select all types at line 9. Line 10 models the fact that a type must matches itself, as in $T_1 \to T_2 \matches T_1 \to T_2$. Line 12 models the part in which the dynamic type is allowed to match a type. In that case, all the arguments of the type must be the dynamic type and we can put the correct number of $\dyn$ thanks to $(Ts[*]: \dyn)$, that means: For each argument put a $\dyn$.
Ultimately, line 16 adds the dynamic type to its grammar for types.

A complicated aspect of gradual typing is the execution of gradual programs. 
For gradual programs, the compiler would insert run-time checks whenever the type system has been optimistic and let the program type check successfully. 
Casts are a standard mechanisms for implementing run-time type checks, and they have been adopted widely in gradual typing \cite{}.  
With this choice, a language designer must equip the language at hand with casts and with a suitable operational semantics for casts. 
The following is the algorithm. 
\begin{figure*}[tbp]
{
\small
\lstinputlisting[numbers=left,language=lprolog]{gradualDyn.tr}
}
\caption{Algorithm for the Core Part of the Gradual Dynamic Demantics.}
\label{fig:gradDynAlg}
\end{figure*}

Figure \ref{fig:gradDynAlg} contains the algorithm for computing the core part of the gradual dynamic semantics. Line 1 augments the language with a cast operator. $\castLNC \app e \app T_1 \app T_2$ means that $e$ is cast from $T_1$ to $T_2$. 
The standard cast semantics strictly disciplines what types can be injected into the dynamic type \cite{}. In particular, type constructors applied to only $\dyn$ can be injected into $\dyn$. These types are called ground types. To make an example, we can $\dyn \to \dyn$ into $\dyn$ but we cannot cast $\Int \to \Int$ directly into $\dyn$. (This cast would be handled with an intermediate burocratic step, see \cite{}).
Line 4 adds ground types to the grammar accordingly. 

Casts such as $\castLNC \app \lambda x:T.e : \Int \to \dyn \Rightarrow \dyn \to \Int$ are considered values. This is because we do not have a means to know if $\lambda x:T.e$, in which $e$ is dynamically typed, will actually return an integer. We would have to apply this function to some argument and check that. 
Therefore, the grammar for values must reflect this. It is also important to mention that the scenario described applies only to types that have arguments, which we call inductive, such as $\rightarrow$, $\List$, but not base types such as $\Int$ and $\Bool$. For the latter types, indeed, casts to and from can be easily checks. 
Accordingly, lines 6-8 is responsible to add values that are wrapped in casts and only for inductive types. 
Line 6 augments the already existing grammar of values with these new values.

The core of the algorithm is at lines 8-16. This code is responsible for producing new reduction rules that gives operators the ability to handle casts. 
To make an example, the application operator must be equipped with the following reduction rule \cite{}. 

$$
\begin{array}{c}
((\castLNC\app v_1 \app (T_1' \to T_2') \app (T_1 \to T_2)) \app v_2)\\
\step \\
(\castLNC \app (v_1 \app (\castLNC \app v_2 \app T_1 \to T_1')) \app T_2' \to T_2)
\end{array}
$$
Since we have a function of type $T_1' \to T_2'$ but we need to simulate that we have a function of type $T_1 \to T_2$, we cast the argument from $T_1$ to $T_1'$ before we pass it. Furthermore, we cast the result from $T_2'$ to $T_2$. 

To automatically build this rule, and to do so for other common operations, we follow the schema of the Gradualizer \cite{}. 
We first select the eliminationForms (line 8)
For the source of the reduction rule, we place a cast at the principal argument. 
For The second thing to notice is that in one step casts are removed from the principal argument. This is because, ultimately, the function that is underneath the casts is the only value that we have to perform a step. We liberate casts after casts until we can use the ordinary beta reduction. In removing casts, however, we have redistributed them to the other argument, which we call the sibling of the principal argument, and in wrapping the whole expression. 
Redistributing casts is necessary to make sure that the reduction step is type preserving. 
To do so, we need to look at the typing rule for the application. $e_2$ makes use of the types of the principal argument. This means that when the cast is removed, it now needs to align with the set of types that are used by the function that is underneath the cast, hence the cast on the sibling argument. 
Here, since 

The transformation that applies 


%
%
%

\section{Limitations and Future Work}\label{future}

One of the limitations is that selectors could be more potent. 
It would be interesting to add a search such as to search all the rules. 
Right now this is possible by scanning rules and pattern-matching $<:$. Turning a subtyping into a gradual typing could be just by saying "turn all <: into $\sim$". 
Right now the language does not have to bing the components of the conclusion of rules arbitrarily. 
Right now, we refer to those parts but if nested patterns are needed we are at loss.  It would be interesting if the programmer could so that it could use the compoennts. 

At the moment, there is no simple way to simply lift a relation to another shape. For example, if we were to add references to a language like the lambda-calculus and similar automatically, we would have to shift from a typing relation to , and from a reduction relation to . It would be interesting to have operators such as lift. 

operation to delete stuff.. 
give a name to rules and select by name such as r-* all the reduction rules. 

Some future work that we plan are inspired with the related work and then discussed in the next section.

%

\section{Related Work}\label{related}

\section{Conclusion}\label{conclusion}

\appendix
\section{Proof of Type Soundness}

\begin{theorem}[Subject Reduction]
For all $\Gamma$, $V$, $V'$, $\LangDef$, $\LangDef'$, $e$, $e'$, if $\Gamma \typeOf V ; \LangDef ; e$ and $V ; \LangDef ; e \step V' ; \LangDef' ; e'$ then $\Gamma \typeOf V' ; \LangDef' ; e'$ and $V\subseteq V'$.
\end{theorem}

\begin{theorem}[Progress Theorem]
For all , if $\Gamma \typeOf V ; \LangDef ; e$ then either 
\begin{itemize}
\item $e = \skipLNC$, or 
\item $e = \errorLNC$, or 
\item $V ; \LangDef ; e \step V' ; \LangDef' ; e'$, for some $e'$.
\end{itemize}
\end{theorem}

\begin{theorem}[Type Soundness]
For all $\Gamma$, $V$, $\LangDef$, $e$, if $\Gamma \typeOf V ; \LangDef ; e$ then $V ; \LangDef ; e \step^{*} V' ; \LangDef' ; e'$ such that  
\begin{itemize}
\item $e' = \skipLNC$, or 
\item $e' = \errorLNC$, or 
\item $V' ; \LangDef' ; e' \step V'' ; \LangDef'' ; e''$, for some $e''$.
\end{itemize}
\end{theorem}

\end{document}

\begin{figure}[tbp]
\small
\textsf{Syntax}  \hfill
\begin{syntax}
  &&&\\
  \text{\sf Basic Types} & B & ::= &  \Int   \\
  \text{\sf Types} & T & ::= &  B \mid \app T\to  T  \\
  \text{\sf Expressions} & e & ::= &  x \mid \lambda x:T.e\mid e\app e  \\
   \text{\sf Values} & v & ::= & \lambda x.e  \\
  \text{\sf Contexts} & E & ::=  &  E\app e\mid v\app E   \\
   \end{syntax}\\
   
\textsf{Type System}  \hfill  \fbox{$\Gamma \vdash e : T$}

\begin{gather*}
\ninference{t-var}{}{
\Gamma\typeOf \app x: \Gamma(x)}
\qquad 
\ninference{t-lambda}
	{\Gamma,x:T_1 \typeOf \app e : T_2} 
	{ \Gamma \typeOf \app \lambda x:T_1. e:  T_1\to T_2}
\\[2ex]
\ninference{t-app}
	{
	\Gamma \typeOf \app e_1 : T_1\to T_2 \\
	\Gamma \typeOf \app e_2 : T_1 	
	} 
	{ \Gamma \typeOf \app e_1\app e_2 : T_2}
\end{gather*}
\textsf{Dynamic Semantics}  \hfill \fbox{$e\step e$}

\begin{align*}
	(\lambda x:T. e)\app v & \step  e[v/x]    	\label{beta}\tagsc{beta}
\end{align*}
$$
\inference
	{e\step e'}
	{E[e] \step E[e']}\,\,\textsc{(ctx)}
\qquad\qquad 
E[\mathit{er}] \step \mathit{er}\,\,\textsc{(err-ctx)}
$$

\caption{Simply typed lambda-calculus.}
\label{fig:LambdaFull}
\end{figure}

\begin{figure}[tbp]
\small
\begin{syntax}
  &&&\\
  \text{\sf Basic Types} & B & ::= &  \Int  \\
  \text{\sf Types} & T & ::= &  B \mid \app T\to  T  \\
  \text{\sf Expressions} & e & ::= &  x \mid \lambda x:T.e\mid e\app e  \\
   \text{\sf Values} & v & ::= & \lambda x.e  \\
  \text{\sf Contexts} & E & ::=  &  E\app e\mid v\app E   \\
   \end{syntax}\\
   
\textsf{Type System}  \hfill  \fbox{$\Gamma \vdash e : T$}

\begin{gather*}
\ninference{t-var}{}{
\Gamma\typeOf \app x: \Gamma(x)}
\qquad 
\ninference{t-lambda}
	{\Gamma,x:T_1 \typeOf \app e : T_2} 
	{ \Gamma \typeOf \app \lambda x:T_1. e:  T_1\to T_2}
\\[2ex]
\ninference{t-app}
	{
	\Gamma \typeOf \app e_1 : T_1\to T_2 \\
	\Gamma \typeOf \app e_2 : \HI{$T_3$} \\ \HI{$T_3 <: T_1$}
	} 
	{ \Gamma \typeOf \app e_1\app e_2 : T_2}
\end{gather*}
\textsf{Subtyping}  \hfill  \fbox{$T<:T$}
\begin{gather*}
\HI{$\Int <: \key{Float}$}
\qquad 
\HI{$\ninference{s-arrow}
	{
	 \typeOf T_3<: T_1 \\
	 \typeOf T_2<: T _4	
}
	{ \typeOf T_1\to T_2 <: T_3\to T_4}
$}
\end{gather*}
\textsf{Dynamic Semantics}  \hfill \fbox{$e\step e$}
\begin{align*}
	(\lambda x:T. e)\app v & \step  e[v/x]    	\label{beta}\tagsc{beta}
\end{align*}
$$
\inference
	{e\step e'}
	{E[e] \step E[e']}\,\,\textsc{(ctx)}
\qquad\qquad 
E[\mathit{er}] \step \mathit{er}\,\,\textsc{(err-ctx)}
$$

\caption{Simply typed lambda-calculus with subtyping.}
\label{fig:LambdaFull}
\end{figure}

\begin{figure}[tbp]
\small

$\mathcal{L}^{<:} = (\textsf{Syntax}, \textit{addSub}(\textsf{Type System}), \textsf{Dynamic Semantics},  \textit{subtyping}(\textsf{Syntax}))$ \\[2ex] %
$\textit{addSub}(TS) = $\\
\hspace*{0.5cm}$\forLang{r}{TS} 
\hspace*{1.4cm}$\inference{r.\key{premises.uniqueOutput} \Rightarrow \textit{Vars}}{r.\key{conclusion}} \app \key{as} \app r' \app \key{in}\\[2ex]
\hspace*{3cm}\inference{
\hspace*{-3.5cm}\key{if}\app {\key{exists\textendash contra}(r',\textit{Vars})} \\\\ 
\hspace*{1cm}\key{then} \app {\forLang{T}{\textit{Vars}}'T<: \key{contra}(r',\textit{Vars})}\\\\ 
\hspace*{0.3cm}\key{else} \app{\forOrderedLang{T_1}{T_2}{\textit{Vars}}'T_1<:T_2}}{r'.\key{conclusion}} 
$\textit{subtyping}(S) = ['\Int <: \Float,$ \\
$\hspace*{2.4cm}\forLang{c}{S.\key{types}}$
$$
\hspace*{2.4cm} \inference{
{\begin{array}{ccc}
\hspace*{-5.3cm}\forLang{\textit{arg}}{\textit{Args}} \\
\hspace*{-1.6cm}\key{if} \app c.\key{contra}(\textit{arg}) \app \key{then} \app\app '\key{primed}(\textit{arg}) <:  \textit{arg} \app \\
\hspace*{-1.1cm}\key{elseif}  \app c.\key{covar}(\textit{arg}) \app \key{then} \app\app '\textit{arg} <: \key{primed}(\textit{arg}) \app \\
\hspace*{-5.4cm}\key{else}  \app\app '\textit{arg} = \textit{arg}
\end{array}
}}{(c.\key{opName}\app (c.\key{newArgs} \Rightarrow \textit{Args})) <: (c.\key{opName}\app c.\key{newArgsPrimed})}$$ 
\hspace*{2.3cm}]
\caption{Transformation of $\mathcal{L}$ into $\mathcal{L}^{<:}$ with $\langCom$}
\label{fig:LambdaFull}
\end{figure}

\begin{figure}[tbp]
\small

$\mathcal{L}^\textsf{big} = (\textsf{Syntax}, \textsf{Type System}, \textit{turnToBig}(\textsf{Syntax},\textsf{Dynamic Semantics}))$ \\[2ex] %
$\textit{turnToBig}(S,DS) = $\\
\hspace*{0.5cm}$\forLang{v}{S.\key{values}} \inference{}{v \step v}$
\hspace*{4cm}$++$\\[2ex]
\hspace*{0.5cm}$\forLang{r}{DS} $\\
\hspace*{1cm}$\key{let}\app \textit{topOp} = r.\key{conclusion}.1.\key{topOp}\app \key{in }$
\hspace*{1cm}$\key{let}\app \textit{args} = r.\key{conclusion}.1.\key{args}\app \key{in }$
\hspace*{1cm}$\key{if}\app \textit{topOp} = 'E\app \key{then} \app \key{skip}$
\hspace*{1cm}$\key{else} $
\hspace*{2cm}$\inference{
\hspace*{-3cm}\key{let}\app \textit{hashmap} = \app args.\key{assign\textendash newVars} \app \key{in} \\\\
\forLang{(arg, var)}{hashmap} 'arg \step var \\\\++\\\\ [r.\key{conclusion}.2 \step r.\key{newVar]} 
}{r.\key{conclusion}.1 \step r.\key{newVar}}$ 
\caption{Transformation of $\mathcal{L}$ into $\mathcal{L}^{\textsf{big}}$ with $\langCom$}
\label{fig:LambdaFull}
\end{figure}

\begin{figure}[tbp]
\small
\begin{syntax}
  &&&\\
  \text{\sf Basic Types} & B & ::= &  \Int \mid \app \Float  \\
  \text{\sf Types} & T & ::= &  B \mid \app T\to  T  \\
  \text{\sf Expressions} & e & ::= &  x \mid \lambda x:T.e\mid e\app e  \\
   \text{\sf Values} & v & ::= & \lambda x.e  \\
  \text{\sf Contexts} & E & ::=  &  E\app e\mid v\app E   \\
   \end{syntax}\\
   
\textsf{Type System}  \hfill  \fbox{$\Gamma \vdash e : T$}

\begin{gather*}
\ninference{t-var}{}{
\Gamma\typeOf \app x: \Gamma(x)}
\qquad 
\ninference{t-lambda}
	{\Gamma,x:T_1 \typeOf \app e : T_2} 
	{ \Gamma \typeOf \app \lambda x:T_1. e:  T_1\to T_2}
\\[2ex]
\ninference{t-app}
	{
	\Gamma \typeOf \app e_1 : T_1\to T_2 \\
	\Gamma \typeOf \app e_2 : T_1 	
	} 
	{ \Gamma \typeOf \app e_1\app e_2 : T_2}
\end{gather*}
\textsf{Dynamic Semantics}  \hfill \fbox{$e\step e$}

\begin{gather*}
\HI{$
	(\lambda x:T. e) \step  (\lambda x:T. e)    	
$}
	\\[2ex]
\HI{$
\ninference{beta-big}
{e_1 \step (\lambda x:T. e) \\ 
e_2 \step v\\
e[v/x] \step e'
}
{e_1\app e_2 \step  e'}
$}
\end{gather*}
\caption{Simply typed lambda-calculus in big-step semantics.}
\label{fig:LambdaFull}
\end{figure}

\bibliographystyle{ACM-Reference-Format}
\bibliography{all.bib}

\begin{thebibliography}{10}
\providecommand{\url}[1]{\texttt{#1}}
\providecommand{\urlprefix}{URL }
\providecommand{\doi}[1]{https://doi.org/#1}

\bibitem{Ciobaca13}
\c{S}tefan Ciob\^ac\u{a}: From small-step semantics to big-step semantics,
  automatically. In: Integrated Formal Methods, 10th International Conference,
  {IFM} 2013, Turku, Finland, June 10-14, 2013. Proceedings. pp. 347--361
  (2013)

\bibitem{Danvy:2008}
Danvy, O.: Defunctionalized interpreters for programming languages. In:
  Proceedings of the 13th ACM SIGPLAN International Conference on Functional
  Programming. pp. 131--142. ICFP '08, ACM, New York, NY, USA (2008)

\bibitem{danvy:reduction-free}
Danvy, O.: From reduction-based to reduction-free normalization. Advanced
  Functional Programming pp. 66--164 (2009)

\bibitem{danvy2004refocusing}
Danvy, O., Nielsen, L.R.: Refocusing in reduction semantics. BRICS Report
  Series  \textbf{11}(26) (2004)

\bibitem{ErdwegGR12}
Erdweg, S., Giarrusso, P.G., Rendel, T.: Language composition untangled. pp.
  7:1--7:8. LDTA '12, ACM, New York, NY, USA (2012)

\bibitem{LangWorkbenches}
Erdweg, S., Storm, T., Völter, M., Boersma, M., Bosman, R., Cook, W.,
  Gerritsen, A., Hulshout, A., Kelly, S., Loh, A., Konat, G.D.P., Molina, P.,
  Palatnik, M., Pohjonen, R., Schindler, E., Schindler, K., Solmi, R., Vergu,
  V.A., Visser, E., Vlist, K., Wachsmuth, G.H., Woning, J.: {The State of the
  Art in Language Workbenches}. In: Erwig, M., Paige, R.F., Wyk, E. (eds.)
  Software Language Engineering, Lecture Notes in Computer Science, vol.~8225,
  pp. 197--217. Springer International Publishing (2013)

\bibitem{Redex}
Felleisen, M., Findler, R.B., Flatt, M.: Semantics Engineering with {PLT
  Redex}. MIT Press (2009)

\bibitem{Kahn87}
Kahn, G.: Natural semantics. In: {STACS} 87, 4th Annual Symposium on
  Theoretical Aspects of Computer Science, Passau, Germany, February 19-21,
  1987, Proceedings. pp. 22--39 (1987)

\bibitem{ltr}
{{Mourad, Benjamin and Cimini, Matteo}}: {L-Tr: A Calculus for Language
  Transformations}. \url{http://www.cs.uml.edu/\~{}mcimini/LTR/index.html}
  (2019)

\bibitem{tapl}
Pierce, B.C.: Types and {P}rogramming {L}anguages. MIT Press (2002)

\bibitem{PLOTKIN}
{Plotkin}, G.D.: A structural approach to operational semantics. DAIMI Report
  FN-19, Computer Science Department of Aarhus University (1981)

\bibitem{Rosu2010}
Rosu, G., \c{S}erb\u{a}nu\c{t}\u{a}, T.F.: {An overview of the K semantic
  framework}. The Journal of Logic and Algebraic Programming  \textbf{79}(6),
  397--434 (Aug 2010)

\bibitem{WrightFelleisen94}
Wright, A.K., Felleisen, M.: A syntactic approach to type soundness.
  Information and Computation  \textbf{115}(1),  38--94 (1994)

\end{thebibliography}


\begin{thebibliography}{8}
\bibitem{ref_article1}
Author, F.: Article title. Journal \textbf{2}(5), 99--110 (2016)

\bibitem{ref_lncs1}
Author, F., Author, S.: Title of a proceedings paper. In: Editor,
F., Editor, S. (eds.) CONFERENCE 2016, LNCS, vol. 9999, pp. 1--13.
Springer, Heidelberg (2016). \doi{10.10007/1234567890}

\bibitem{ref_book1}
Author, F., Author, S., Author, T.: Book title. 2nd edn. Publisher,
Location (1999)

\bibitem{ref_proc1}
Author, A.-B.: Contribution title. In: 9th International Proceedings
on Proceedings, pp. 1--2. Publisher, Location (2010)

\bibitem{ref_url1}
LNCS Homepage, \url{http://www.springer.com/lncs}. Last accessed 4
Oct 2017
\end{thebibliography}

\end{document}

\end{document}

\section{First Section}
\subsection{A Subsection Sample}
Please note that the first paragraph of a section or subsection is
not indented. The first paragraph that follows a table, figure,
equation etc. does not need an indent, either.

Subsequent paragraphs, however, are indented.

\subsubsection{Sample Heading (Third Level)} Only two levels of
headings should be numbered. Lower level headings remain unnumbered;
they are formatted as run-in headings.

\paragraph{Sample Heading (Fourth Level)}
The contribution should contain no more than four levels of
headings. Table~\ref{tab1} gives a summary of all heading levels.

\begin{table}
\caption{Table captions should be placed above the
tables.}\label{tab1}
\begin{tabular}{|l|l|l|}
\hline
Heading level &  Example & Font size and style\\
\hline
Title (centered) &  {\Large\bfseries Lecture Notes} & 14 point, bold\\
1st-level heading &  {\large\bfseries 1 Introduction} & 12 point, bold\\
2nd-level heading & {\bfseries 2.1 Printing Area} & 10 point, bold\\
3rd-level heading & {\bfseries Run-in Heading in Bold.} Text follows & 10 point, bold\\
4th-level heading & {\itshape Lowest Level Heading.} Text follows & 10 point, italic\\
\hline
\end{tabular}
\end{table}

\noindent Displayed equations are centered and set on a separate
line.
\begin{equation}
x + y = z
\end{equation}
Please try to avoid rasterized images for line-art diagrams and
schemas. Whenever possible, use vector graphics instead (see
Fig.~\ref{fig1}).

\begin{figure}
\includegraphics[width=\textwidth]{fig1.eps}
\caption{A figure caption is always placed below the illustration.
Please note that short captions are centered, while long ones are
justified by the macro package automatically.} \label{fig1}
\end{figure}

\begin{theorem}
This is a sample theorem. The run-in heading is set in bold, while
the following text appears in italics. Definitions, lemmas,
propositions, and corollaries are styled the same way.
\end{theorem}
%
%
\begin{proof}
Proofs, examples, and remarks have the initial word in italics,
while the following text appears in normal font.
\end{proof}
For citations of references, we prefer the use of square brackets
and consecutive numbers. Citations using labels or the author/year
convention are also acceptable. The following bibliography provides
a sample reference list with entries for journal
articles~\cite{ref_article1}, an LNCS chapter~\cite{ref_lncs1}, a
book~\cite{ref_book1}, proceedings without editors~\cite{ref_proc1},
and a homepage~\cite{ref_url1}. Multiple citations are grouped
\cite{ref_article1,ref_lncs1,ref_book1},
\cite{ref_article1,ref_book1,ref_proc1,ref_url1}.
%
%
%
%

\end{document}